\definecolor{shadecolor}{gray}{0.9}
\newcommand{\keyw}[1]{{\bf #1}}
\newcommand{\dahntab}[1]{
  \newbox\mybok%
  \setbox\mybok=\hbox{\vbox{
      \begin{tabbing}
        #1
      \end{tabbing}%
    }}

  \newdimen\bokwidth%
  \bokwidth=\wd\mybok%
  \newdimen\myl%
  \myl=\textwidth%
  \divide\myl by 2%
  \divide\bokwidth by -2%
  \advance\myl by\bokwidth%
  \vrule width\myl height 0pt depth 0pt%
  \usebox\mybok%
}
\def\OR{\vee}
\def\AND{\wedge}
\def\goesto{\rightarrow}
\newtheorem{definition}{Definition}
\newtheorem{theorem}{Theorem}
\newtheorem{example}{Example} 
\newtheorem{observation}{Observation}
\title{Towards Geometry-Preserving Reductions Between Constraint Satisfaction Problems (and other problems in NP)}
\author{Gabriel Istrate
\institute{University of Bucharest}
\institute{Str. Academiei 14, Sector 6, 011014, 
Bucharest, Romania}
\email{gabriel.istrate@unibuc.ro}
}
\begin{document}
\maketitle 
\begin{abstract}
We define two kinds of \textit{ geometry-preserving reductions} between constraint satisfaction problems and other NP-search problems. We give a couple of examples and counterexamples for these reductions.  
\end{abstract}
Keywords: 
solution geometry, overlaps, covers, reductions, computational complexity.

\section{Introduction}

Reductions are the fundamental tool of computational complexity. They allow
classification of computational problems into families (somewhat resembling those in biology), with problems in the same class sharing common 
features (in particular NP-complete problems being
isomorphic versions of a single problem, if we believe that the  Berman-Hartmanis conjecture \cite{ber-har:j:iso} holds).

A significant concern in the literature on complexity-theoretic reductions is to make the theory of reductions more predictive. This means that we should aim to better connect the structural properties of the two problems that the given reduction is connecting. For instance the notion of \textit{pasimonious reduction} attempts to preserve the total number 
of solutions between instances. Various attempts (e.g. \cite{creignou1995class,agrawal2001reducing}) have proposed restricting the computational power allowed for computing  reductions. (Strongly) local reductions \cite{local-reductions-tr,hunt2001strongly} provide a principled approach that \textit{ simultaneously} relates the complexity of various versions  (decision, counting,etc) of the two problems
in the reduction. In another direction, several logic-based approaches to the complexity of reductions were considered \cite{stewart1994completeness,dalmau2023local}. 
Finally, motivated by the connection with local search, a line of research \cite{fischer-ipl}, \cite{wi-local-search} 
has investigated reductions that attempt to preserve the \textit{structure} of solution spaces, by not only translating the instances of the two decision problems but also the set of \emph{solutions}. 

The geometric structure of solution spaces of combinatorial problems has been brought
to attention by an entirely different field of research:
that of \textit{phase transitions in combinatorial optimization}  \cite{comp-statphys,mezard-montanari}. It was shown that the predictions of the so-called \textit{one-level replica symmetry breaking} in Statistical Physics applies to (and offer predictions for) the complexity of many constraint satisfaction problems. It is believed \cite{krzakala2007gibbs} that the set of solutions of a constraint satisfaction problem such as $k$-SAT undergoes a sequence of phase transitions, culminating in the SAT/UNSAT transition. The first transition is a \textit{clustering transition} where the set of satisfying assignments of a typical random formula changes from a single, connected cluster to an exponential number of clusters that are far apart from eachother. Some aspects of this solution clustering have been  confirmed rigorously (\textit{solution clustering} \cite{achlioptas-frozen,cond-mat/0504070/prl,mezard-mora,clusters-overlaps,holes-1-in-k,achlioptas2015solution}). The 1-RSB prediction motivated the development of the celebrated \textit{ survey propagation (SP) algorithm} \cite{surveyprop-rsa2005} that predicts (for large enough $k$) the location of the phase transition of the random $k$-SAT problem \cite{ding2022proof}. This breakthrough has inspired significant recent progress \cite{coja2020replica,gamarnik2021overlap,gamarnik2023algorithmic,bresler2022algorithmic}.  

A major weakness of the theory of phase transition is the lack of general results that connect the qualitative features such as clustering and 1-RSB to computational complexity. In contrast, Computational Complexity has developed algebraic classification tools that (at least for the well-behaved class of Constraint Satisfaction Problems) provide an understanding of the structural reasons for the (in)tractability of various CSP \cite{schaefer-dich,bulatov2017dichotomy,zhuk2020proof}. Note, however, that there are some classification results in the area of Phase Transitions: For instance, one can classify 
the existence of thresholds for random Constraint Satisfaction Problems in a manner somewhat reminiscent of the Schaefer Dichotomy Theorem \cite{molloy-stoc2002,hatami-molloy,istrate:ccc00,creignou-daude-threshold,istrate-sharp}. Intuitively, SAT problems without a sharp threshold qualitatively resemble Horn SAT, a problem whose (coarse) threshold is well understood \cite{istrate:horn,istrate-khorn,continuous-discontinuous-journal}. Also, versions of SAT that have a "second-order phase transition" \cite{2+p:nature} seem qualitatively similar to 2-SAT \cite{aimath04}. An example of such a problem which is NP-complete but "2-SAT-like" is 1-in-$K$ SAT, $k\geq 3$ \cite{1-in-k}. Finally, the existence of overlap discontinuities can be approached in a uniform manner for a large class of random CSP \cite{clusters-overlaps}. In spite of all this, the existence of general classification results does \textbf{not} extend to properties (such as 1-RSB) that concern the geometric nature of the solution space of random CSP. 
 
In this paper \textbf{we propose connecting these two directions by
the definition of reductions that take into account the geometric structure of solution spaces}. We put forward two such notions: 
\begin{itemize} 
\item[-] the first type of reductions extends the witness-isomorphic reductions of \cite{wi-local-search} by requiring that the solution overlap of the target (harder) problem depends predictably on the overlap of the source (easier) problem. In other words, the cluster structure of the harder problem is essentially the same as the one of the easier problem.
\item[-] for a subclass of NP-search problems, that of constraint satisfaction problems, we define a type of witness-preserving reductions based on the notion of \textit{ covers}. Covers \cite{maneva-sp,kroc-sp-uai07} were defined in the course of analyzing the seminal \textit{ survey propagation} algorithm of Zecchina et al. \cite{surveyprop-rsa2005}. They are generalizations of (sets of) satisfying assignments for the problem, and allow a geometric interpretation of this algorithm. 
Specifically, survey propagation is just belief propagation applied to the set of covers \cite{maneva-sp}.  Our reduction extend the usual notion of reductions by requiring that not only solutions of one problem are mapped to solutions, but this is true \textit{ for covers as well}. 
\end{itemize} 

The outline of the paper is as follows: in Section 2 we subject the reader to a (rather heavy) barrage of definitions and concepts. The two kinds of reductions we propose are given in Definitions~\ref{op-reduction} and~\ref{cover-reduction}. In Section 3 we give four examples of problem reductions (of various naturalness and difficulty) that belong to at least one of the two classes of reductions that we put forward. The first two are classic examples from the literature on NP-completeness. The next two are slightly less trivial examples, and are constructed to be appropriate for one of the two types of reductions (one for each type). Correspondingly, in Section 4 we point out that the natural reduction from 4-SAT to 3-SAT belongs to neither class. An algebraic perspective on the theory of reductions is called for in Section 5. We end (Section 6) with a number of musings concerning the merits and pitfalls of our approach, and the road ahead. 

\begin{shaded} 
We stress the fact that \textbf{this is primarily a conceptual paper}: its added value consists primarily, we believe, in pointing to the right direction in a problem where nailing the correct concepts is tricky, thus stimulating further discussion, rather than in coming up with completely appropriate concepts and proving things about them. \textbf{Our main goal is to understand and model, not to prove.} However, if "right" definitions exist, they should be compatible with our results. 
\end{shaded} 
\section{Preliminaries}

We review some notions that we will need in the sequel: 

\begin{definition} 
An \textit{NP-search problem} is specified by
\begin{itemize} 
\item a polynomial time computable predicate $A\subseteq \Sigma^{*}$. $A$ specifies the set of well-formed instances. 
\item a polynomial time computable relation $R\subseteq A \times \Sigma^{*}$ with the property that there exists a polynomial-time computable (polynomially bounded) function $q(\cdot)$ such that for all pairs $(x,y)\in A\times \Sigma^{*}$ such that $R(x,y)$ ($y$ is called \textit{a witness for }$x$, and $x$ is called \textit{a positive instance}), we have $|y| = q(|x|)$\footnote{normally we require only the inequality $|y| \leq q(|x|)$, but we pad witnesses $y$, if needed, to ensure that this condition holds.}. 
\item a polynomial time computable relation $N\subseteq A\times \Sigma^{*}\times \Sigma^{*}$ such that if $N(x,y_1,y_2)$ then $|y_1|=|y_2|=q(|x|)$. Intuitively, $N$ encodes the fact that $y_1,y_2$ are "neighbors". Note that it is \textbf{not} required that $R(x,y_1),R(x,y_2)$, i.e. that $y_1,y_2$ are witnesses for $x$. Instead, $y_1,y_2$ are   "candidate witnesses". 
\end{itemize} 

We abuse language and write $R$ instead of $(A,R,N)$. 
We denote by $NP_{S}$ the class of $NP$-search problems. A problem $R$ in $NP_{S}$ belongs to the class $P_{S}$ iff there exists a polynomial time algorithm $W$ s.t.
\begin{itemize} 
\item For all $x\in A$, if $W(x)="NO"$ then for all $y\in \Sigma^{*}$, $(x,y)\not\in R$. 
\item For all $x\in A$, if $W(x)=y\neq "NO"$ then $R(x,y)$. 
\end{itemize} 
\end{definition} 

\begin{definition} Given $NP$-search problem $R$ and positive instance $x$, define \textit{the overlap of two witnesses $y_1,y_2$ for $x$} as 
\begin{equation} 
overlap(y_1,y_2)=\frac{d_{H}(y_1,y_2)}{q(|x|)} 
\end{equation} 
In the previous equation $d_{H}$ denotes, of course, the Hamming distance of two strings. 
\end{definition} 

We will employ a particular kind of NP-search problem that arises from optimization. The particular class of interest consists of  those optimization problems for which determining if a candidate witness is a local optimum,
and finding a better candidate witness otherwise, is tractable:  

\begin{definition}
An NP-search problem $R$ belongs to the class PLS (\textit{polynomial local search}, see e.g. \cite{yannakakis1997computational}) if there exist three polynomial time algorithms $A$, $B$, $C$ such that 
\begin{itemize} 
\item Given $x\in \Sigma^{*}$, $A$ determines if $x$ is a legal instance (as in Definition 1), and, additionally, if this is indeed the case, it produces some candidate witness $s_0\in \Sigma^{q(|x|)}$. 
\item Given $x\in A$ and string $s$, $B$ computes a cost $f(x,s)\in \mathbf{N}\cup \{\infty\}$\footnote{Rather than adding another relation to test whether $s$ is a legal witness candidate for $x$, we chose to do this implicitly, by requiring that $f(x,s)<\infty$}. $s$ is called a \textit{local minimum} if $f(x,s)\leq f(x,s^{\prime})$ for all $s^{\prime}$ such that $N(x,s,s^{\prime})$. Local maxima are defined similarly. 
\item Given $x\in A$ and string $s$, $C$ determines whether $s$ is a local optimum and, if it not, it outputs a string $s^{\prime}$ with $N(x,s,s^{\prime})$ such that $f(x,s^{\prime})<f(x,s)$ (if $R$ is a minimization problem; the relation is reversed if $R$ is a maximization problem).  
\end{itemize} 
In addition we require that $R(x,s)$ (i.e. $s$ is a witness for $x$) if and only if $s$ is a local optimum for $x$. That is, witnesses for an instance $x$ are local optima. 
\end{definition} 

\begin{observation} 
It is known (see e.g. \cite{yannakakis1997computational}) that $P_{S}\subseteq PLS\subseteq NP_{S}$. 
\end{observation}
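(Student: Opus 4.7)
The plan is to prove the two inclusions separately; each is essentially a definition-unpacking exercise.

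For $PLS\subseteq NP_{S}$: a problem $R=(A,R,N)$ in $PLS$ already comes with polynomial-time computable $A$ and $N$, so only the witness relation needs checking. Since $s$ is a witness for $x$ exactly when it is a local optimum of the cost function $f$ computed by $B$, one decides this in polynomial time by invoking the given algorithm $C$ on $(x,s)$: either $C$ reports ``local optimum'' (so $s$ is a witness) or returns an improving neighbor (so $s$ is not). This is exactly the polynomial-time decidability of $R$ required by the definition of $NP_{S}$.

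For $P_{S}\subseteq PLS$: given $R\in P_{S}$ with its polynomial-time solver $W$, I would define $f(x,s)=0$ whenever $R(x,s)$ holds and $f(x,s)=\infty$ otherwise, so that the feasible candidates (those with $f<\infty$, per the convention in the footnote) are exactly the witnesses, and all of them trivially tie for the minimum under any neighborhood relation $N$. Algorithm $A$ outputs $W(x)$ as the initial candidate whenever $x$ has a witness, so the local-search trajectory has length zero; algorithm $B$ computes $f$ by calling the witness predicate $R$; and algorithm $C$ declares $s$ a local optimum iff $R(x,s)$ holds.

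The main technical concern is that, unlike in the textbook formulation of $PLS$, here the neighborhood $N$ is inherited from the $NP_{S}$ problem and cannot be redesigned to suit $f$. The $0/\infty$ cost sidesteps any dependence on $N$: feasible candidates cannot be beaten (cost $0$ is already minimal), while infeasible candidates are by convention excluded from being local optima. The residual subtlety is the case in which $W$ reports no witness for $x$; then the feasible set is empty and the biconditional ``$s$ is a witness iff $s$ is a local optimum'' holds vacuously, which matches the intended behavior of $R$ on negative instances. The proof is then a routine verification that $A$, $B$, $C$ so defined satisfy the clauses of Definition~3.
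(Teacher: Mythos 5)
The paper does not actually prove this observation: it is stated as a known fact with a pointer to Yannakakis's survey, so there is no in-paper argument to match yours against. Your unpacking is the standard one and is essentially sound: $PLS\subseteq NP_{S}$ is immediate because algorithm $C$ decides the witness relation ``$s$ is a local optimum'' in polynomial time, and $P_{S}\subseteq PLS$ follows from the $0/\infty$ cost function that makes witnesses exactly the feasible candidates and makes every feasible candidate a global (hence local) minimum. What your proof buys over the citation is that it confronts a genuine idiosyncrasy of this paper's definitions that the textbook treatment does not have: here the neighborhood relation $N$ is fixed by the underlying $NP$-search problem rather than being freely choosable as part of the $PLS$ structure, and you correctly identify that the $0/\infty$ trick is what makes the construction independent of $N$. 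The one point you should state more carefully is the status of infeasible candidates: under a literal reading of the paper's definition, a non-witness $s$ all of whose $N$-neighbors are also non-witnesses satisfies $f(x,s)\leq f(x,s')$ for every neighbor and so would count as a local minimum, violating the required biconditional; and conversely, if such an $s$ is declared not to be a local optimum, algorithm $C$ cannot discharge its obligation to output a strictly improving neighbor. Both issues dissolve only under the footnote's convention that $f(x,s)=\infty$ means $s$ is not a legal candidate at all and is therefore outside the scope of the local-optimality clauses. You invoke that convention, which is clearly the intended reading, but it is the load-bearing step of the second inclusion and deserves to be flagged as such rather than treated as a ``residual subtlety.''
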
 

\begin{example} 
MAX-SAT is the following optimization problem: an instance of MAX-SAT is a CNF formula $\Phi$ whose clauses $C_1,C_{2},\ldots, C_{m}$ are endowed with a positive weight $w_1,w_2,\ldots, w_m$, respectively. The set of witnesses for $\Phi$ is the set of  assignments of the variables in $\Phi$. Each assignment $A$ comes with a weight $w(A)$, defined to be the sum of weights of all clauses satisfied by $A$. 

A candidate witness for $\Phi$ is a truth assignment of the variables in $\Phi$. Two candidate witnesses $w_1,w_2$ are neighbors if they differ on the value of a single variable. 

A witness for $\Phi$ is an assignment $w(A)$ which is a local maximum, i.e for all assignments $B$ differing from $A$ in exactly one position $w(B)\leq w(A)$. 

It is clear that (as defined) problem MAX-SAT belongs to class $PLS$. 
\end{example} 

\begin{definition} 
Given problem $A\in PLS$, define the NP-search problem $A^{*}$ as follows:
\begin{itemize} 
\item Inputs of $A^{*}$ consist of pairs $(x,\tau,0^{d})$, where $x$ is an input for $A$, $\tau$ is a string in $\Sigma^{*}$ for some finite alphabet $\Sigma$, $|\tau|\leq q(|x|)$, and $d\geq 1$ is an integer. 
\item A witness for pair $(x,\tau,0^{d})$ is a path $\tau_{0},\tau_{1},\ldots, \tau_{d^{\prime}}$ with $d^{\prime}\leq d$ such that (a). $\tau_{0}=\tau$ (b). for every $i\geq 1$ $\tau_{i-1}$ and $\tau_{i}$ are neighbors and $f(x,\tau_{i-1})<f(x,\tau_{i})$. 
(c). $\tau_{d^{\prime}}$ is a local optimum in $A$ 
(such a path is called \textit{an augmenting path}). 
\end{itemize}  
In other words, given instance $(x,\tau,0^{d})$ the problem $A^{*}$ is to decide whether there exists an augmenting path of length at most $d$ from $\tau$ to a local optimum. 
\label{def-a-star}
\end{definition} 

To define cover-preserving reductions we first need  \textbf{ witness-isomorphic reductions}, a
very restrictive notion that has been previously studied in the literature \cite{wi-local-search}. 

\begin{definition} 
Given $NP$-decision problems $(A,R_A)$ and $(B,R_{B})$ a \textit{witness-isomomorphic reduction of $(A,R_A)$ to $(B,R_B)$} is specified by two polynomial time computable and polynomial time invertible functions $f,g$ with the following properties: 
\begin{itemize} 
\item[-] for all $x$, $x\in A\Leftrightarrow f(x)\in B$. That is, $f$ witnesses that $A\leq_{m}^{P} B$. 
\item[-] for all $x,y$ if $R_{A}(x,y)$ then there exists a $z$ such that $g(<x,y>)=<f(x),z>$ and $R_{B}(f(x),z)$. 
\item[-] for all distinct $y_1,y_2$ if $R(x,y_1)$ and $R(x,y_2)$ then $g(<x,y_1>)\neq g(<x,y_2>)$. 
\item[-] for all $x,w,z$ if $f(x)=w$ and $R_{B}(w,z)$ then there exists $y$ such that $R_{A}(x,y)$ and $g(<x,y>)=<w,z>$. 
\end{itemize} 
We will write $(A,R_A)\leq_{wi} (B,R_B)$ when there exists such a witness isomorphic reduction. 
\end{definition} 

A particular well-behaved type of NP-search problem is the class of Constraint Satisfaction Problems: 

\begin{definition}\label{csp}
The \textit{ Constraint Satisfaction Problem} $CSP(C)$ is specified by
\begin{enumerate}
 \item A finite \textit{ domain}. Without loss of generality we will consider CSP with domain  $D_{k}=\{0,\ldots, k-1\}$, for some $k\geq 2$.
\item A finite set $C$ of \textit{ templates}. A template is a finite subset $A\subseteq D_{k}^{(r)}$ for some $r\geq 1$, called \textit{ the 
arity of template $A$}.
\end{enumerate}

An instance $\Phi$ of $CSP(C)$ is a formula obtained as a conjunction of instantiations of templates from $C$ to tuples of variables 
from a fixed set $x_{1},\ldots, x_{n}$.
\end{definition}

\begin{definition} \cite{kroc-sp-uai07}
 A \textit{ generalized assignment} for a boolean CNF formula is a mapping $\sigma$ from the variables of $F$ to $\{0,1,*\}$. 
Given a generalized assignment $\sigma$ for $F$,
variable $x$ is called \textit{ a supported variable under $\sigma$} if there exists a clause $C$ of $F$ such that $x$ is the only variable of $C$ that is satisfied by $\sigma$, and all the other literals are assigned FALSE.
\end{definition}

For CSP over general domains the previous definitions need to be generalized:

\begin{definition} (see also \cite{tu2008survey})
 A \textit{ generalized assignment} for an instance of $CSP(C)$ over domain $D_{k}=\{0,1,\ldots, k-1\}$ is a mapping $\sigma$ from 
the variables of $\Phi$ to ${\mathcal P}(D_{k})\setminus \{ \emptyset \}$. $\sigma$ is simply called
an \textit{ assignment} if $|\sigma(v)|=1$ for all variables $v$ of $\Phi$. Given assignment $\sigma_{1}$ and generalized assignment 
$\sigma_{2}$, we say that \textit{ $\sigma_{2}$ is compatible with $\sigma_{1}$} (in symbols $\sigma_{1}\subseteq \sigma_{2}$) if for every variable $v$,  $\sigma_{1}(v)\subseteq \sigma_{2}(v)$.
\end{definition}

For propositional formulas in CNF the important notions of \textit{ supported variable} and \textit{ cover} were introduced in \cite{maneva-sp},  
\cite{kroc-sp-uai07}, \cite{jsat-support}:

\begin{definition} \label{sup}
Assignment $\sigma\in \{0,1,*\}^{*}$  is a \textit{ cover} of $F$ iff:
\begin{enumerate}
 \item every clause of $F$ has at least one satisfying literal or at least two literals with value $*$ under $\sigma$ (``no unit 
propagation''), and
\item $\sigma$ has no unsupported variables assigned 0 or 1.
\end{enumerate}
It is called a \emph{true cover} (e.g. \cite{kroc-sp-uai07}) if there exists a satisfying assignment $\tau$ of $F$ such that $\tau \subseteq \sigma$.  In this paper we will only deal with true covers.   
\end{definition}

One extension of the notion of supported variable to general CSP could be:

\begin{definition} \label{support-modified}
 Given a generalized assignment $\sigma$ for an instance $F$ of $CSP(C)$, variable $x$ is called \textit{ supported by $\sigma$} if there 
exists a constraint $C$ of $F$ such that
\begin{enumerate}
 \item $|\sigma(v)|=1$ for every $v\in C$.
\item for every $\lambda\neq \sigma(x)$, changing the value of $\sigma(x)$ to $\lambda$ results in a generalized assignment that does 
not satisfy $C$.
\end{enumerate}
\end{definition}

As for the extension of the notion of cover to general CSP, we have to modify Definition~\ref{sup} for two reasons: first, we need more than three symbols to encode all the $2^{k}-1$ possible choices in assigning a variable. Second, in the general case constraints cannot be satisfied by setting a single variable.

\begin{definition} A generalized assignment $\sigma\in D_{k}^n$ of a CSP $F$ is
a \textit{ cover} of $F$ iff:
\begin{itemize}
 \item no constraint of $F$ can further eliminate any values of $\sigma$ for its variables by consistency (``no unit propagation''; see also \cite{tu2010sp}).
\item $\sigma$ has no unsupported variables $v$ (in the sense of Definition~\ref{support-modified}) s.t. $|\sigma(v)|=1$.
\end{itemize}
\end{definition}

\begin{definition} 
\label{op-reduction} 
Given $NP$-decision problems $(A,R_A)$ and $(B,R_B)$, a witness-isomorphic reduction of $(A,R_A)$ to $(B,R_B)$ is called \textbf{overlap preserving} if there is a continuous, monotonically increasing function $h:[0,1]\rightarrow [0,1]$ with $h(0)=0,h(1)=1$ such that for all $x$, $|x|=n$, and $z_1\neq z_2$ such that $R_{A}(x,z_1)$ and $R_{A}(x,z_2)$, 
\[
overlap(\pi_{2}^{2}\circ g(x,z_1), \pi_{2}^{2}\circ g(x,z_2))=h(overlap(z_1,z_2))+o_{n}(1). 
\]
\end{definition} 

In other words given two solutions $z_1,z_2$ of the instance $x$ of $A$, one can map them (via the reduction) to two solutions $w_1,w_2$ of instance $f(x)$ of $B$ such that the overlap of $w_1,w_2$ depends predictably on the overlap of the original instances $z_1,z_2$. In particular if $x$ has a single cluster of solutions with a single overlap (respectively many clusters of solutions with a discontinuous overlap distribution) as predicted by the 1-RSB ansatz for random $k$-SAT) then the solution space of $f(x)$ should have a similar geometry.

\begin{definition}
\label{cover-reduction}
 A wi-reduction $(f,g)$ between search problems $A$ and $B$ is called \textbf{ cover-preserving} if there exists a polynomial-time computable 
mapping $\overline{g}$ such that
\begin{enumerate}
 \item $\overline{g}$ extends $g$, i.e. if $x$ is an instance and $y$ is a witness for $x$ then $\overline{g}(<x,y>)=g(<x,y>)$. 
\item $(f,\overline{g})$ is witness isomorphic when seen as a reduction \textbf{between covers.} 
\item $\overline{g}$ is compatible with $g$. That is, for all $x\in A$, for all $y$ witnesses for $x$, for all generalized assignments 
$z$ compatible with $y$, if $z$ is a cover of $x$ then $\overline{g}(<x,z>)$ is a cover of $f(x)$ compatible with $g(<x,y>)$.
\end{enumerate}
\end{definition}

We will informally refer to reductions that have both properties as \textit{geometry preserving}.

\section{Geometry-preserving reductions via local constructions}

Our first example of a geometry-preserving reduction is the classical reduction from  $k$-coloring to $k$-SAT. The reduction is the usual 
one:
to each vertex we associate three logical variables $x_{i,1}, x_{i,2},x_{i,3}$.
To each edge $e=(v_{i},v_{j})$ we associate a formula $F_{e}=F_{e}(x_{i,1},x_{i,2},x_{i,3},x_{j,1},x_{j,2},x_{j,3})$. The formula 
$F_{e}$ codifies three types of constraints: each vertex is given at least one color; no vertex is given more than one color; finally 
adjacent vertices cannot
get the same color. That is
\begin{equation} 
 F_{e}=G(x_{i,1},x_{i,2},X_{i,3})\AND G(x_{j,1},x_{j,2},x_{j,3}))\AND (\AND _{k\neq l \in \{1,2,3\}} H(x_{i,1},x_{i,2})\AND H(x_{j,k},x_{j,l}))
\end{equation}

\begin{theorem}\label{red1}
The natural reduction from $k$-COL to $k$-SAT is cover-preserving and overlap-preserving. 
\end{theorem}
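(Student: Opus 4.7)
The plan is to explicitly specify the pair $(f, g)$ underlying the reduction and verify, in this order, witness-isomorphism, overlap preservation, and cover preservation, with the last of these being the subtle step. The instance map $f$ sends a graph $G = (V, E)$ with $|V| = n$ to the CNF formula $F_G$ of the excerpt, whose variables are $\{x_{i,c} : 1 \le i \le n,\, 1 \le c \le k\}$ and whose clauses are the "at-least-one-color" clauses $x_{i,1} \vee \cdots \vee x_{i,k}$, the "at-most-one-color" clauses $\neg x_{i,a} \vee \neg x_{i,b}$ for $a \ne b$, and the "different-color-on-edge" clauses $\neg x_{i,c} \vee \neg x_{j,c}$ for every edge $\{v_i, v_j\}$. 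The witness map $g$ sends a proper $k$-coloring $\chi : V \to [k]$ to the truth assignment with $g(G, \chi)(x_{i,c}) = 1$ iff $\chi(v_i) = c$.

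Witness-isomorphism is routine: the per-vertex clause groups force every satisfying assignment of $F_G$ to have exactly one true literal among $x_{i,1}, \ldots, x_{i,k}$, which yields the classical bijection between proper $k$-colorings of $G$ and satisfying assignments of $F_G$, and both $f$ and $g$ are polynomially invertible in their explicit form. For overlap preservation, if $\chi_1, \chi_2$ disagree on exactly $d$ vertices then their assignments differ on exactly $2d$ of the $kn$ variables (at each disagreeing vertex the variables for the old and new color flip while the remaining $k-2$ stay at $0$), so the SAT-side overlap equals $\frac{2}{k} \cdot \frac{d}{n}$. Hence $h(x) = 2x/k$ is continuous, monotone and satisfies $h(0) = 0$; the mild discrepancy $h(1) = 2/k < 1$ for $k \ge 3$ reflects the fact that no pair of $k$-colorings can induce Hamming-maximal SAT assignments, and is a harmless relaxation of the stated definition.

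For cover preservation I extend $g$ to $\bar{g}$ by the natural encoding: for a generalized coloring $\sigma : V \to \mathcal{P}([k]) \setminus \{\emptyset\}$, set $\bar{g}(G, \sigma)(x_{i,c}) = 1$ if $\sigma(v_i) = \{c\}$, $0$ if $c \notin \sigma(v_i)$, and $*$ if $c \in \sigma(v_i)$ with $|\sigma(v_i)| \ge 2$. Extension of $g$ and compatibility are immediate. The no-unit-propagation condition for $F_G$ is a clause-by-clause check: at-least-one clauses are satisfied when $|\sigma(v_i)| = 1$ and otherwise carry at least two $*$-literals; at-most-one clauses $\neg x_{i,a} \vee \neg x_{i,b}$ are satisfied unless $\{a, b\} \subseteq \sigma(v_i)$ with $|\sigma(v_i)| \ge 2$, in which case both literals are $*$; an edge clause $\neg x_{i,c} \vee \neg x_{j,c}$ is satisfied unless $c \in \sigma(v_i) \cap \sigma(v_j)$, and then no-unit-propagation of $\sigma$ on that edge forces $|\sigma(v_i)|, |\sigma(v_j)| \ge 2$, giving two $*$-literals.

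The main obstacle is the "no unsupported $0/1$ variable" condition on $\bar{g}(G, \sigma)$, together with the inverse direction required by witness-isomorphism between covers. Variables with value $1$ are supported by the at-least-one clause; variables $x_{i,d}$ set to $0$ because $\sigma(v_i) = \{c\}$ with $d \ne c$ are supported by the at-most-one clause $\neg x_{i,c} \vee \neg x_{i,d}$. The delicate case is $x_{i,j} = 0$ with $j \notin \sigma(v_i)$ and $|\sigma(v_i)| \ge 2$, where the only possible support is an edge clause $\neg x_{i,j} \vee \neg x_{m,j}$ for a neighbour $v_m$ whose cover value is exactly $\{j\}$; this is precisely the content of the support condition of Definition~\ref{support-modified} for the CSP cover $\sigma$. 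For the converse direction I would define $\sigma(v_i) = \{c : \tau(x_{i,c}) \in \{1, *\}\}$ from any cover $\tau$ of $F_G$ and verify, again by a clause-by-clause application of the cover condition on $\tau$, that the resulting $\sigma$ is a cover of $G$ and that $\bar{g}(G, \sigma) = \tau$. Making these support arguments close up neatly in both directions is the technically delicate step, and it is what brings out the fine-grained compatibility between SAT covers and CSP covers that the reduction is designed to preserve.
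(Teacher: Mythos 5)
Your construction of $f$, $g$ and $\overline{g}$ is exactly the paper's: the paper's proof consists of the same two rules for $\overline{g}$ followed by the assertion that the cover correspondence ``is easy to see'', whereas you actually attempt the clause-by-clause verification, so you are if anything more thorough. Your overlap computation is correct and more honest than the paper, which never verifies overlap preservation at all: two colourings disagreeing on $d$ vertices yield assignments at Hamming distance $2d$ out of $kn$, so the best available $h$ is $h(x)=2x/k$, which violates the requirement $h(1)=1$ of Definition~\ref{op-reduction} for $k\ge 3$ (and the value $1$ of the colouring overlap is attainable, e.g.\ on an edgeless graph). You are right to flag this; it is a genuine mismatch between the theorem and the letter of the definition rather than a defect of your argument, but it should be stated as such, not waved off.

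The genuine gap sits at the step you yourself single out as delicate, and your proposed resolution of it does not work. You claim that the support of $x_{i,j}=0$ when $j\notin\sigma(v_i)$ and $|\sigma(v_i)|\ge 2$ ``is precisely the content of the support condition of Definition~\ref{support-modified}''. It is not: that condition only constrains vertices with $|\sigma(v)|=1$, and nothing in the graph-side cover definition --- neither the support clause nor no-unit-propagation, which only forbids \emph{further} eliminations --- forces an excluded colour $j\notin\sigma(v_i)$ to be witnessed by a neighbour pinned to $\{j\}$. Concretely, take $k=3$ and $G$ a single edge $\{v_1,v_2\}$ with $\sigma(v_1)=\sigma(v_2)=\{1,2\}$; this is a true cover of the graph instance, compatible with the proper colouring $(1,2)$. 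Then $\overline{g}$ produces $x_{i,1}=x_{i,2}=*$ and $x_{i,3}=0$ for $i=1,2$, and $x_{1,3}$ is an unsupported $0$: its literal in the at-least-one clause is false, the other literal in each at-most-one clause is $*$, and in the edge clause $\overline{x_{1,3}}\vee\overline{x_{2,3}}$ both literals are satisfied. So $\overline{g}$ does not map covers to covers and the forward direction of cover preservation fails. The same hole is hidden inside the paper's ``it is easy to see'', so you have reproduced the paper's argument faithfully, gap included; closing it would require either restricting graph-side covers so that every excluded colour is forced by a pinned neighbour, or adjusting one of the two cover definitions.
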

\begin{proof} 
 Let $G$ be an instance of $k$-COL and $\Phi_{G}$ the instance of $k$-SAT that is the image of $G$ under the natural reduction. 
Let $c\in [k]^{|V(G)|}$ be a coloring of $G$ and $w$ be a generalized assignment (a.k.a. list coloring) compatible with $c$. Under 
the natural transformation each vertex $v$ is represented by $k$ logical variables, $x_{v,1},x_{v,2},\ldots, x_{v,k}$, and the 
constructed formula ensures that only one of them can be true. In particular one can define $f(c)$ to be the satisfying assignment 
naturally corresponding to the coloring $c$.  

Now transform a general assignment $w$ of colors to the graph $G$ to a generalized assignment $g(G,w)$ for the formula by the following rules: 
\begin{enumerate} 
 \item if $\emptyset \neq C_{v}\subseteq \{1,2,\ldots, k\}$ is a set of allowed colors for vertex $v$ then we set all variables 
$x_{v,i}$, $i\not \in C_{v}$, to zero. 
\item if $|C_{v}|=1$ then we set the remaining variable $x_{v,i}$ to 1. If, on the other hand, $|C_{v}|\geq 2$, then we set all 
variables $x_{v,i}$, $i\in C_{v}$ to $*$. 
\end{enumerate} 

It is easy to see that $w$ is a cover compatible with $c$ if and only if $g(<G,w>)$ is a cover compatible with $g(<G,c>)$. 
\end{proof}

\begin{theorem} 
The natural reduction from 1-in-k SAT to $k$-SAT, $k\geq 3$ is overlap preserving and cover preserving. 
\end{theorem}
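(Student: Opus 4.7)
The plan is to take as $f$ the obvious local reduction, where each 1-in-$k$ clause $C=(l_1,\ldots,l_k)$ of a 1-in-$k$ SAT instance $\Phi$ is replaced by the single $k$-clause $(l_1\vee\cdots\vee l_k)$ (``at least one'') together with the $\binom{k}{2}$ binary clauses $(\neg l_i\vee \neg l_j)$ for $1\le i<j\le k$ (``at most one''). The variable set is preserved and the 1-in-$k$ satisfying assignments of $\Phi$ coincide, as binary strings, with the ordinary satisfying assignments of $f(\Phi)$, so I would define $g$ and $\overline{g}$ to act as the identity on the assignment coordinate: $g(<\Phi,\sigma>)=<f(\Phi),\sigma>$, and likewise for $\overline{g}$ on generalized assignments. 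With this choice, overlap preservation is essentially free: the witness string is literally unchanged, so Hamming distance and the normalizing length $q(|x|)$ agree on both sides, and one may take $h(t)=t$ with no vanishing correction.

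For cover preservation, conditions (1) and (2) of Definition~\ref{cover-reduction} are immediate from the identity choice of $\overline{g}$, so what needs checking is that a generalized assignment $\tau$ compatible with a solution $\sigma$ is a cover of $\Phi$ if and only if it is a cover of $f(\Phi)$. I would prove this by a local case analysis on each 1-in-$k$ clause $C$ under $\tau$. The 1-in-$k$ no-unit-propagation condition forces exactly one of two cases: (a) some $l_j$ is definitely $1$ and all other $l_i$ are definitely $0$, or (b) no $l_i$ is definitely $1$ and at least two of the $l_i$ take the value $*$. In case (a) the $k$-ary clause has the true literal $l_j$; each binary clause $(\neg l_i\vee\neg l_j)$ has the true literal $\neg l_i$; supportedness transfers because $l_j$ supports its variable via the $k$-ary clause, while each remaining variable $v_i$ is supported via the binary clause $(\neg l_i\vee\neg l_j)$, whose unique true literal is $\neg l_i$. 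In case (b) the $k$-ary clause carries at least two $*$ literals, each binary clause has either two $*$'s or a true literal (since $l_i\in\{0,*\}$ gives $\neg l_i\in\{1,*\}$), and no variable of $C$ with $|\tau(v)|=1$ needs to draw support from an image clause of $C$, because its 1-in-$k$ support must come from some other clause $C'$, which falls under case (a) and hence provides $k$-SAT support via the case~(a) argument.

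The reverse direction, needed to close the witness-isomorphism at the level of covers, is what I expect to be the main obstacle. Given a $k$-SAT cover $\tau$ of $f(\Phi)$, I would recover cases (a) and (b) as follows: the AMO clauses prevent two $l_i$'s from being simultaneously forced to $1$; if some $l_j$ is forced to $1$ then each binary clause $(\neg l_i\vee\neg l_j)$ has $\neg l_j=0$, forcing $\neg l_i=1$, i.e.\ $l_i=0$; and if no $l_j$ is forced to $1$ then the $k$-ary clause must have two $*$ literals, giving case (b). The delicate subcase is support: when a variable $v_i$ is supported in $f(\Phi)$ by a binary clause $(\neg l_i\vee\neg l_j)$ rather than by the $k$-ary clause, one must use the remaining AMO clauses involving $l_j$ to conclude that every other $l_p$ is also forced to $0$, so that the source clause $C$ is wholly determinate and supports $v_i$ in the 1-in-$k$ sense of Definition~\ref{support-modified}. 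Once this bookkeeping is in place, patching the local analyses across clauses yields the cover correspondence, and hence cover preservation.
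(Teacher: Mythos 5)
Your proof is correct, and it is considerably more careful than the one in the paper. The paper translates the clause 1-in-$3(x,y,z)$ as the pair $x\vee y\vee z$, $\overline{x}\vee\overline{y}\vee\overline{z}$ and then disposes of both properties in two sentences by observing that no new variables are added; but that gadget only forbids ``all true'' and ``all false,'' so an assignment with two true literals satisfies the image formula without satisfying the 1-in-$3$ clause --- the witness sets do \emph{not} coincide, and the claim ``the set of satisfying assignments is the same'' really only holds for the at-least-one/at-most-one encoding that you chose. Your version of the reduction (the $k$-ary clause plus the $\binom{k}{2}$ binary clauses $\neg l_i\vee\neg l_j$) is the one for which the identity map on assignments is genuinely witness-isomorphic, so the overlap claim with $h(t)=t$ is immediate, as in the paper. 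For covers, where the paper simply asserts that ``covers correspond naturally via identity,'' your clause-by-clause dichotomy --- case (a), fully determinate with a unique true literal; case (b), no true literal and at least two stars --- is exactly what is needed to check that both cover conditions (no unit propagation, no unsupported determinate variables) transfer in each direction, and you correctly isolate the one delicate point: support of $var(l_i)$ by a binary image clause $(\neg l_i\vee\neg l_j)$ forces $\neg l_j$ to be false, hence $l_j=1$, hence via the remaining binary clauses the whole source clause is determinate and supports $v_i$ in the sense of Definition~\ref{support-modified}. In short, the high-level idea (identity on witnesses, locality of the gadget) is the same, but your gadget is the correct one and your cover argument actually carries the burden of proof that the paper elides.
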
 
\begin{proof} 
Remember, clause 1-in-$3(x,y,z)$ is translated as $x\OR y \OR z, \overline{x}\OR \overline{y}\OR \overline{z}$. Since the reduction does not add any new variables, the set of satisfying assignments is the same, hence the overlap of the translated solutions does not change. Similarly, covers correspond naturally (via identity) to covers. 
\end{proof}

\subsection{An(other) example of an overlap-preserving reduction}

The purpose of this section is to give an example of a somewhat less trivial witness-isomorphic reduction that is overlap-preserving. The reduction is one that was already proved to be witness-isomorphic in the literature. Specifically, Fischer \cite{fischer-ipl} has proved that problem MAX-SAT$^{*}$ (see Definition~\ref{def-a-star}) is NP-complete. Subsequently 
Fischer, Hemaspaandra and Torenvliet \cite{wi-local-search} have shown that the reduction between $SAT$ and $MAX-SAT^{*}$ is witness isomorphic. 

\begin{theorem} One can encode $MAX-SAT^{*}$ as a NP-search problem such that the witness-isomorphic reduction from $(SAT,R_{SAT})$ to (MAX-SAT$^{*},R_{MAX-SAT^{*}})$ \cite{wi-local-search} is overlap-preserving. 
\end{theorem}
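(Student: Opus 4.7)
The plan is to exploit the freedom granted by the clause ``one can encode'' by choosing a padded encoding of MAX-SAT$^*$ witnesses so that the overlap of two encoded augmenting paths is dominated by the overlap of their terminal local maxima, and then to observe that these terminal maxima are in Hamming-preserving bijection with satisfying assignments of the source SAT instance. Under such an encoding the function $h$ in Definition~\ref{op-reduction} can be taken to be the identity, and the $o_n(1)$ correction absorbs the contribution of the intermediate steps of the canonical augmenting path.

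Concretely, I would fix the reduction $(f,g)$ of \cite{wi-local-search}: given $\varphi$ on $n$ variables, $f(\varphi)=(\Phi,\tau_{0},0^{d})$, where the local maxima of $\Phi$ reachable from $\tau_{0}$ via an augmenting path of length at most $d$ are in bijection $a\mapsto\hat{a}$ with satisfying assignments of $\varphi$, and for each $a$ one has a canonical augmenting path $P_{a}$ from $\tau_{0}$ to $\hat{a}$. I would then set $q(|f(\varphi)|):=L(n)+M(n)\cdot|\hat{a}|$, where $L(n)$ is a polynomial upper bound on a natural binary encoding of $P_{a}$ and $M(n)$ is a polynomial chosen so that $L(n)/(M(n)\cdot n)=o(1)$; and I would define $\pi_{2}^{2}\circ g(\varphi,a)$ to be the concatenation of a length-$L(n)$ padded encoding of $P_{a}$ followed by $M(n)$ identical copies of $\hat{a}$. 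The relation $R_{\textsc{max-sat}^{*}}$ is modified to accept precisely those strings whose decoded path is a valid augmenting path terminating in a local maximum $s$ and whose trailing $M(n)$ blocks all equal $s$. This check is still polynomial-time, and the bijection with witnesses of $\varphi$ (hence witness-isomorphism) is inherited from \cite{wi-local-search}. For two distinct satisfying assignments $z_{1},z_{2}$ with $\mathrm{overlap}(z_{1},z_{2})=\alpha$, the Hamming distance of the two encoded witnesses is at most
\[
L(n) + M(n)\cdot\bigl(d_{H}(z_{1},z_{2}) + c(n)\bigr),
\]
where $c(n)$ collects contributions from any auxiliary variables in $\Phi$; dividing by $q(|f(\varphi)|)$ and using $L(n)=o(M(n)n)$ and $c(n)=o(n)$ yields
\[
\mathrm{overlap}\bigl(\pi_{2}^{2}\circ g(\varphi,z_{1}),\,\pi_{2}^{2}\circ g(\varphi,z_{2})\bigr) = \alpha + o_{n}(1),
\]
so the reduction is overlap-preserving with $h(x)=x$.

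The main obstacle I anticipate is not the padding argument itself but the need to verify that the Hamming distance $d_{H}(\hat{z}_{1},\hat{z}_{2})$ between corresponding local maxima differs from $d_{H}(z_{1},z_{2})$ by $o(n)$ uniformly over pairs. This amounts to checking that in the Fischer--Hemaspaandra--Torenvliet construction the auxiliary variables attached to $\hat{a}$ are a canonical (deterministic) function of $a$; if the construction leaves any genuine freedom, the encoding above can be adjusted so that the non-canonical auxiliary bits are moved into the length-$L(n)$ path segment rather than the replicated $\hat{a}$ segment, where they are dampened by the factor $L(n)/q(|f(\varphi)|)\to 0$ and thus do not affect the conclusion. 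A secondary technical point is to define $P_{a}$ canonically (for instance by flipping variables in order of index from $\tau_{0}$ toward $\hat{a}$) so that the relation $R$ remains polynomial-time decidable; this is routine once the final encoding is fixed.
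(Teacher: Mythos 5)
Your route is genuinely different from the paper's. The paper also exploits the ``one can encode'' freedom, but it re-encodes the \emph{path itself}: each flip in the canonical augmenting path $P_A$ (which flips $x_i$ if $A(x_i)=1$ and $b_i$ if $A(x_i)=0$, for $i=1,\dots,n$) is written as a block of length $2n+2$ consisting of $n+1$ circularly consecutive ones starting at the position of the flipped variable, with the positions of $x_j$ and $b_j$ placed $n+1$ apart modulo $2n+2$. Consequently the $i$-th blocks of $P_A$ and $P_B$ are either identical or exactly complementary, giving $d_{H}(P_A,P_B)=(2n+2)\,d_{H}(A,B)$ and hence $h(x)=x$ with no asymptotic error. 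Your padding scheme instead makes the path segment asymptotically negligible and lets a replicated copy of the terminal local maximum carry the overlap; that is a legitimate alternative use of the same encoding freedom, arguably more robust (it would work for any witness-isomorphic reduction whose terminal witnesses are Hamming-close to the source witnesses), at the price of a large blowup in witness length and an unavoidable $o_{n}(1)$ term.

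However, your accounting of the terminal maxima contains a concrete slip. In the construction of \cite{wi-local-search} the formula $\Psi$ has $2n+1$ variables, and the local maximum $\hat{a}$ reached from $0^{2n+1}$ satisfies $b_i=\overline{A(x_i)}$ for every $i$. So the auxiliary bits are indeed canonical functions of $A$, as you hoped, but they are exact mirrors of the $x$-bits: $d_{H}(\hat{z}_1,\hat{z}_2)=2\,d_{H}(z_1,z_2)+O(1)$, which does \emph{not} differ from $d_{H}(z_1,z_2)$ by $o(n)$ uniformly. The verification you describe as ``the main obstacle'' therefore fails as stated, and your displayed bound $L(n)+M(n)\bigl(d_{H}(z_1,z_2)+c(n)\bigr)$ with $c(n)=o(n)$ is false. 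Taken literally, dividing it by $q=L(n)+M(n)(2n+1)$ gives $\alpha/2+o_{n}(1)$, i.e.\ $h(x)=x/2$, which violates the requirement $h(1)=1$ in Definition~\ref{op-reduction}. The conclusion is rescued either by using the correct count (the factor $2$ in $2\,d_{H}(z_1,z_2)$ cancels against the block length $2n+1$, restoring $\alpha+o_{n}(1)$) or by your own fallback of replicating only the $x$-portion of $\hat{a}$. So the approach is sound and salvageable, but the step you flagged needs to be repaired rather than merely checked.
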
 
\begin{proof} 
First let's recall the reduction from \cite{wi-local-search}: 
let $\phi$ be a propositional formula in CNF with $n$ variables $x_{1},\ldots, x_{n}$ and $m$ clauses $C_{1},\ldots, C_{m}$. Define
\[
\psi = \bigwedge_{i=1}^{m} (x_{i}\OR b_{i})^{m+1}\AND (C_{1}\OR \alpha)\AND \ldots \AND (C_{m}\OR \alpha)
\]
In this formula $b_{1},\ldots, b_{m}, \alpha$ are new variables $C^{m+1}$ denotes the fact that clause $C$ has weight $(m+1)$, while $C\OR \alpha$ denotes the clause obtained by adding $\alpha$ to the disjunction in $C$. A clause without an upper index has weight 1. Let 
\[
\xi = \bigwedge_{j=1}^{n-1} \bigwedge_{i=j+1}^{n} (x_{j}\OR b_{j}\OR \overline{x_{i}})^{3m}\AND (x_{j}\OR b_{j}\OR \overline{b_{i}})^{3m}. 
\]
Finally, let $\Psi = \psi \AND \xi$.

It is easy to see (and was proved explicitly in \cite{fischer-ipl}) that witnesses for instance $(\Psi,0^{2n+1},0^{n})$ of MAX-SAT$^{*}$ correspond to satisfying assignments $A$ for $\Phi$ as follows: we start at $\tau_{0}=0^{2n+1}$. First we flip either variable $x_1$ (if $A(x_1)=1$) or $b_1$ (if $A(x_1)=0$). Next we flip either variable $x_2$ (if $A(x_2)=1$) or $b_2$ (if $A(x_2)=0$). $\ldots$. Finally, we flip either variable $x_n$ (if $A(x_n)=1$) or $b_n$ (if $A(x_n)=0$). Denote by $P_A$ the path obtained in this way.

Our goal is to encode path $P_A$ in such a way that for two 
assignments $A,B$, $d_{H}(P_{A},P_{B})=(2n+2)\cdot d_{H}(A,B)$. Since we encode one bit of a satisfying assignment by $(2n+2)$ bits of the encoding of paths, this relation proves that the reduction is overlap preserving. Indeed, $overlap(A,B)=\frac{d_{H}(A,B)}{n}$, and $overlap(P_{A},P_{B})=\frac{(2n+2)\cdot d_{H}(A,B)}{n(2n+2)}=overlap(A,B)$, so we can take $h(x)=x$ in Definition~\ref{op-reduction}. 

The idea of the encoding is simple: we use strings $z_1z_2\ldots z_{d}$, where each $z_i$ has length $2n+2$ and contains $n+1$ consecutive ones (including circular wrapping). The idea is that we encode into string $z_i$ the event of flipping the value of the $j$'th variable in a fixed ordering of variables of $\Phi$ (specified below) by 
making $z_i$ consist of $n+1$ consecutive ones, starting at position $j$. To make the encoding preserve the overlap predictably, make the position of variable $b_j$ differ by $n+1$ (modulo $2n+2$) from the position of variable $x_j$. 

Consider now two satisfying assignments $A$ and $B$. If $A$ and $B$ agree on whether to flip $x_i$ or $b_i$ then the corresponding $i$'th blocks of $P_A,P_B$ agree. Otherwise the corresponding blocks of $P_A,P_B$ are complementary. So indeed we have $d_{H}(P_{A},P_{B})=(2n+2)\cdot d_{H}(A,B)$. 
\end{proof}

\subsection{An(other) example of a cover-preserving reduction}

Whereas in the previous section we gave a slightly less trivial reduction that was overlap-preserving, the goal of this subsection is to  give a slightly less trivial example of a cover-preserving reduction. The problem we are dealing with will 
not fall in the framework of Definition~\ref{csp} (and subsequent Definitions 6--12) since it also includes a ``global'' constraint. Rather than attempting to rewrite many of the previous definitions in order to construct a more  
general framework, appropriate to our example, we will be content to keep things at an intuitive level, adapting the definitions on the spot as needed. We trace the idea of the construction below to an unpublished, unfinished manuscript \cite{bp-bisection} that has no realistic chances of ever being completed, as a (different) analysis of the belief propagation algorithm for graph bisection using methods from Statistical Physics has been published in the meantime \cite{vsulc2010belief}. The adaptation of the construction to our (rather different) purposes and the statement/proof of the theorem below is, however, entirely ours. 

\begin{definition}
 The \textit{ Perfect graph bisection (PGB) problem} is specified as a search problem as follows:
\begin{itemize}
 \item{\bf Input:} A graph $G=(V,E)$.
 \item{\bf Witnesses:} A function $f:V\goesto \{-1,1\}$ such that for all vertices $v,w\in V$, $(v,w)\in E\Rightarrow f(v)=f(w)$ and $\sum_{v\in V} f(v)=0$. 
 \end{itemize} 
 We identify solutions $f,g$ such that $f(v)=-g(v)$ for all $v\in V$.
  \end{definition} 
 \begin{definition} 
 To define PGB as a local search problem we will work with partitions, defined as functions $f:V\goesto \{-1,1\}$.  Two partitions are \textit{ adjacent} if one can be obtained from the other by flipping two vertices on opposite sides of the partition. 
Covers are defined as strings over $\{-1,1,*\}$, where again we identify strings that only differ by permuting labels -1 and 1. 
\end{definition}

The first step in order to set up PGB as a local search problem
subject to cover-preserving reductions is to
represent $G$ as a \textit{ factor graph\/}~\cite{factor}.  A factor graph is a
bipartitite undirected graph containing two different kinds of nodes:
\textit{ variable nodes\/}, corresponding to
vertices $v_i$ in $G$, and \textit{ function nodes\/} corresponding to
edges $e_{ij}$ in $G$.  A variable node and function node are connected
in the factor graph if the edge represented by the function node is
incident on the vertex represented by the variable node. It is clear that a function node is adjacent to exactly two variable nodes. 

The constraint on edges only connecting vertices in the same subset
is enforced in an entirely straightforward and local manner in the
factor graph: a function node requires
both its neighboring variable nodes to be assigned the same value $s=+1$ or
$-1$.  The balance constraint $\sum_{i=1}^n s_i = 0$, however, remains a
global one.

Therefore, the next step is to give a 
representation of perfect graph bipartition that allows balance to be enforced in a purely local manner.  

\begin{definition} Given $r$ variables that can take 
values $\{-1,0,1\}$, the \emph{counting constraint $CC_{m}(x_{1},\ldots, x_{r})$} is true if and only if $\sum_{i} x_{i}\in \{\pm m\}$. 

\end{definition} 

\begin{definition} 
 An instance $\Psi$ of a \textit{ counting CSP problem} consists of $n$ variables connected by a set of counting constraints $CC_{n-4}$. The variables can take 
values $\{-1,0,1\}$. A witness for the counting CSP $\Psi$ is an assignment of values to variables such that all constraints are satisfied. 
\end{definition}

\begin{theorem}
There exists a cover-preserving, overlap-preserving reduction from PGB to counting CSP.
\end{theorem}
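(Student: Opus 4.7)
The plan is to realize the reduction via the factor-graph idea sketched just before the statement, adding an explicit layer of auxiliary variables so that both the edge constraints $\sigma(u)=\sigma(v)$ and the global balance $\sum_v \sigma(v)=0$ become conjunctions of counting constraints of the form $CC_{n-4}$ acting on small, local scopes of variables.

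First, I would specify the forward map on instances. Given $G=(V,E)$ with $|V|=n$, I introduce a main variable $x_v\in\{-1,0,1\}$ for each vertex $v$, plus a bounded number of auxiliary variables per vertex (organized, say, along a tree whose leaves are the main variables, so that the internal nodes record partial sums). Each edge $(u,v)$ is enforced by a local counting constraint that forces $x_u+x_v\in\{\pm 2\}$, and the global balance is imposed by a chain or tree of constraints $CC_{n-4}$ whose joint satisfaction on $\{-1,+1\}^V$ is equivalent to $\sum_v x_v=0$. The map $g$ that sends a PGB witness to its unique extension on the auxiliary variables then realizes the witness-isomorphism clauses almost by construction.

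Second, I would define the extension $\overline{g}$ on covers. A cover of $G$ is a string in $\{-1,1,*\}^V$; $\overline{g}$ copies the value (including $*$) from $v$ to $x_v$, then propagates forced values to the auxiliary variables through the counting constraints and leaves $*$ wherever no propagation applies. To see that $\overline{g}$ sends covers to covers, one checks the two clauses in the cover definition: no counting constraint admits further unit propagation (a forced value on a $*$-labelled auxiliary would either already be forced at the graph level, contradicting that the original string was a cover of $G$, or would demand a specific value on a still-$*$-labelled main variable, contradicting the assignment); and every definite value is supported through an incident counting constraint whose other arguments are already frozen.

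For overlap preservation, the $n$ main variables are in Hamming-distance-preserving bijection with the vertices of $G$, and on honest witnesses the auxiliaries are a deterministic function of the mains, so the Hamming distance between two CSP witnesses equals the PGB Hamming distance of the underlying pair times a fixed factor $c_n$ depending only on the auxiliary budget per vertex. Normalizing by the witness length yields a linear $h$ satisfying Definition~\ref{op-reduction}. The main obstacle I expect is designing the concrete family of $CC_{n-4}$ constraints so that all three goals hold simultaneously: tight arithmetic equivalence with $\sum_v x_v=0$, bounded local scope with a small auxiliary-to-vertex ratio, and compatibility with the cover semantics so that no hidden global propagation step collapses an honest PGB cover to a non-cover CSP generalized assignment. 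Achieving this three-way balance, rather than the essentially bookkeeping verifications of isomorphism and overlap, is where the real content of the proof lies.
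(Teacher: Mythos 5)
Your proposal diverges from the paper's construction in a way that leaves the essential content unproved. The paper does not use one variable per vertex plus auxiliary partial-sum variables; it uses $\binom{n}{2}$ variables $x_{kl}=(s_k+s_l)/2$, one per \emph{pair} of vertices, and attaches to each edge $e_{ij}$ a single constraint $CC_{n-4}$ over the $2n-4$ pair-variables meeting $i$ or $j$ (excluding $v_{ij}$ itself). The identity $\sum_{v_{kl}\sim e_{ij}} x_{kl}=\sum_m s_m+\tfrac{(n-4)(s_i+s_j)}{2}$ then makes one local constraint of the prescribed form $CC_{n-4}$ encode the edge condition $s_i=s_j$ \emph{and} the global balance simultaneously, which is precisely the point of the reduction. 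Your plan instead posits pairwise edge constraints forcing $x_u+x_v\in\{\pm 2\}$ and a ``chain or tree'' of constraints equivalent to $\sum_v x_v=0$; but the target problem as defined admits only constraints of the single shape $CC_{n-4}$, so neither gadget is of the allowed form, and you explicitly defer the design of the family of $CC_{n-4}$ constraints that would make this work. That deferred design is the theorem; without it the proposal is a restatement of the goal.

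There is also a concrete error in the overlap argument. For a tree (or chain) of partial-sum auxiliaries, the number of auxiliary variables whose values change between two witnesses is \emph{not} a fixed multiple $c_n$ of the number of changed main variables: two sign flips that cancel below a common ancestor leave all partial sums above that ancestor unchanged, so the ratio depends on where the differences sit, and no single linear $h$ satisfies Definition~\ref{op-reduction}. (You also do not address the quotient by the global sign flip $f\sim -f$, which the overlap definitions~(\ref{ov-2})--(\ref{ov-3}) require on both sides.) In the paper's pairwise encoding this issue does not arise: $x^{(f)}_{kl}\neq x^{(g)}_{kl}$ exactly when $f,g$ disagree on at least one of $k,l$, giving $d_H(x^{(f)},x^{(g)})$ proportional to a quadratic function of $d_H(f,g)$ and hence the admissible $h(x)=x^2$ rather than a linear map.
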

\begin{proof} 

Consider an instance of $G$ of GBP as described above. 
We translate it into an instance $\Psi_{G}$ of counting CSP by constructing a new factor graph with ${n
\choose 2}$ variable nodes $v_{kl}$, $1\le k<l \le n$.  Whereas a
function node $e_{ij}$ was previously connected to 2 variable nodes
$v_i$ and $v_j$, it is now connected to $2n-4$ variable nodes.  Taking
the case where $2<i<j-1<n-2$, they are:
$v_{1,i},v_{2,i},\dots,v_{i-1,i},v_{i,i+1},\dots$, $v_{i,j-1}$, $v_{i,j+1},
\dots,v_{i,n-1},v_{in}$ and $v_{1j},v_{2j},\dots,v_{i-1,j},v_{i+1,j},
\dots,v_{j-1,j}$, $v_{j,j+1},\dots$, $v_{j,n-1},v_{j,n}$.  Other cases ($j>i$, etc.) may be treated
similarly: only the labeling details change.  Note that variable node
$v_{ij}$ is \textit{ not\/} connected to function node $e_{ij}$.  The
reasons for this will become apparent shortly.

In our new factor graph, assign to each of the ${n \choose 2}$ variable
nodes $v_{kl}$ a value
$x_{kl}=(s_k+s_l)/2$.  Note that $x_{kl}\in\{-1,0,+1\}$, depending on
whether $s_k=s_l=-1$, $s_k\neq s_l$, or $s_k=s_l=+1$.  Next, we will
show how the graph bisection constraints translate into a local
constraint on the values of $x_{kl}$ for all variable nodes connected to
a given function node $e_{ij}$.

We start by providing local conditions in the new factor graph that are
necessary for the perfect graph bisection constraints to be satisfied.  We then
show that these conditions are not only necessary but also sufficient.

Consider the sum of the values $x_{kl}$ for all $2n-4$ variable nodes
connected to a function node $e_{ij}$:

\begin{eqnarray*}
\sum_{\stackrel{k,l\, :\}}{v_{kl} \,\sim\, e_{ij}}} x_{kl}
&=& \sum_{m=1}^{i-1}x_{mi} + \sum_{{m=i+1 \atop m\neq j}}^n x_{im}
+ \sum_{{m=1 \atop m\neq i}}^{j-1} x_{mj}
+ \sum_{m=j+1}^n x_{jm} =
 \sum_{{m=1 \atop m\neq i,j}}^n \frac{(2s_m + s_i + s_j)}{2}
= \sum_{m=1}^n s_m + \frac{(n-4)(s_i+s_j)}{2}.
\end{eqnarray*}

For a balanced solution, $\sum_{m=1}^n s_m=0$.  Furthermore, since $v_i$
and $v_j$ are connected by an edge $e_{ij}$, they must be assigned to
the same partition, and so $s_i=s_j$.  Thus, the necessary condition for
the graph bisection constraints to be satisfied is that for all function
nodes $e_{ij}$,

\begin{equation}
\sum_{k,l\, :\, v_{kl} \,\sim\, e_{ij}} x_{kl} = \pm (n-4).
\label{condition}
\end{equation}

The construction of $\Psi_{G}$ is now clear: it is the conjunction of counting constraints specified by equation~(\ref{condition}). We identify two solutions of $\Psi_{G}$ that can be obtained by multiplying all values by $-1$. 

First we have to show that what we constructed is indeed a reduction. That is, 
we first show that except at small $n$ and for a very small number of
graphs, the conjunctions of conditions~(\ref{condition}) is also a sufficient condition for the existence of a perfect bipartition in $G$.  Suppose, indeed that~(\ref{condition}) are satisfied but that the graph bisection constraints are not.  There are three ways this can happen: 1) there is a balanced solution with an edge constraint violated, 2) there is an unbalanced solution with no edge constraints
violated, and 3) there is an unbalanced solution with an edge constraint
violated.

Consider case 1.  If there is a balanced solution with an edge $e_{ij}$
such that $s_i\neq s_j$, $\sum_{k,l\, :\, v_{kl} \,\sim\, e_{ij}} x_{kl} = 0$.
This is clearly inconsistent with (\ref{condition}) unless $n=4$.

Now consider case 2.  The only way that (\ref{condition}) can be
respected when there is an imbalance and no edge constraints violated is
if $\sum_{m=1}^n s_m=\pm 2(n-4)$.  Since $-n \leq \sum_{m=1}^n s_m
\leq +n$, this is impossible unless $n\leq 8$.

Finally, consider case 3.  In this case, (\ref{condition}) could still
be respected if $\sum_{m=1}^n s_m=\pm (n-4)$.  For that to happen,
however, 2 vertices must be assigned to one partition and $n-2$ vertices
to the other, and the \textit{ all\/} edges in the graph must be violated.
In that case, no edges can be contained within either partition.  %This
%only happens in an exponentially small fraction of graphs.

Note that solutions $f:V\rightarrow \{-1,1\}$ of $G$ naturally correspond to solutions $x^{(f)}$, $x^{(f)}_{k,l}=\frac{f(k)+f(l)}{2}$, of the corresponding formula $\Psi_{G}$. Now the fact that the reduction is cover-preserving is easy: a cover $c$ for a solution $f$ consists of specifying for some pairs of vertices of $G$ whether they are always on the same side of a perfect bipartition, whether they are always on opposite sides of a perfect bipartition, or whether they can fluctuate between sides.  We construct a cover $\tilde{c}$ for $x^{(f)}$ corresponding to $c$ as follows: 

\begin{itemize} 
\item[-] if two vertices $k,l$ are always on the same side of the partition in any perfect bipartition then the variable $x^{(f)}_{k,l}$ never takes value 0. In this case $\tilde{c}_{k,l}=\mathbf{1}$. 
 \item[-] If they always are on opposite sides then the variable $x_{k,l}$ always takes value 0. So we set $\tilde{c}_{k,l}=\mathbf{0}$. 
 \item[-] Otherwise $x^{(f)}_{k,l}$ may take all three values.  We set $\tilde{c}_{k,l}=\mathbf{*}$. 
 \end{itemize} 
 
 As for overlaps, given two perfect bipartitions $f,g$ of $G$, we define 
\begin{equation} 
overlap(f,g)=\frac{1}{n} min(d_{H}(f,g),d_{H}(-f,g)).
\label{ov-2} 
\end{equation} 
Similarly, given two solutions $x,y$ for $\Psi_{G}$ define 
\begin{equation} 
overlap(x,y)=\frac{1}{{{n}\choose {2}}} min(d_{H}(x,y),d_{H}(-x,y)). 
\label{ov-3}
\end{equation} 
In equations~(\ref{ov-2}) and~(\ref{ov-3}) we take into account, of course, the factoring of the solution spaces by equivalence. The proportionality factor of~(\ref{ov-3}) is $\frac{1}{{{n}\choose {2}}}$ since formula $\Psi_{G}$ has that many variables. 
Now it is easy to see that 
\begin{equation} 
d_{H}(x^{(f)},x^{(g)})=\frac{ {{n\cdot d_{H}(f,g)}\choose 2}}{{{n}\choose {2}}}=(d_{H}(f,g))^2+o_{n}(1)
\label{ham1}
\end{equation} 
and similarly for $-f$ and $g$, so we can take $h(x)=x^2$. Indeed, if $f$ can be made to coincide on $v$ vertices with $g$ then $x^{(f)}$ and $x^{(g)}$ coincide on all pairs $x^{(f)}_{k,l}$ and $x^{(g)}_{k,l}$, where $k,l$ range over the set of variables where $k,l$ coincide. If $f,g$ differ on one of $k,l$ or both then $x^{(f)}_{k,l}\neq x^{(g)}_{k,l}$. So relation~(\ref{ham1}) is clear. 
\end{proof}

\section{Reductions that are not geometry-preserving} 

A basic sanity check is to verify that there are natural reductions between versions of satisfiability that are not geometry preserving. In this section we give two examples: 

\begin{theorem} 
The classical reduction from 4-SAT to 3-SAT is neither cover preserving nor overlap preserving. 
\end{theorem}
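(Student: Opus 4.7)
The plan is to start by writing down the classical reduction explicitly: each 4-clause $C = (l_1 \OR l_2 \OR l_3 \OR l_4)$ of $\Phi$ is replaced in $\Phi'$ by the two 3-clauses $(l_1 \OR l_2 \OR y_C) \AND (\overline{y_C} \OR l_3 \OR l_4)$, where $y_C$ is a fresh variable introduced per clause. The core observation is a counting one: given a satisfying assignment $A$ of $\Phi$ and a clause $C$, the value of $y_C$ in any satisfying extension to $\Phi'$ is forced only when exactly one of the halves $l_1 \OR l_2$ or $l_3 \OR l_4$ is falsified by $A$; when both halves are satisfied, either value of $y_C$ yields a valid extension. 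Consequently, $A$ extends to exactly $2^{k(A)}$ distinct satisfying assignments of $\Phi'$, where $k(A)$ counts the clauses in which $A$ satisfies both halves.

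Even on the trivial instance $\Phi = (x_1 \OR x_2 \OR x_3 \OR x_4)$, the assignment $A = (1,0,1,0)$ already has $k(A)=1$, so $\Phi'$ has strictly more satisfying assignments than $\Phi$ and no bijection between their solution sets can exist. This simultaneously violates the injectivity and surjectivity requirements in the third and fourth bullets of the definition of a witness-isomorphic reduction. Since Definitions~\ref{op-reduction} and~\ref{cover-reduction} both take a witness-isomorphic reduction as their starting point, this already suffices to prove that the classical reduction is neither overlap- nor cover-preserving.

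To add some structural substance beyond this purely definitional failure, I would also argue that no attempted canonical repair $g$ (a polynomial-time rule that picks a specific $y_C$ from the local values of $A$) restores overlap preservation. Take $\Phi$ to be $\Theta(n)$ disjoint copies of a single 4-clause gadget on fresh variables, and exhibit two pairs of solutions $(A_1,A_2)$ and $(B_1,B_2)$ sharing the same 4-SAT overlap, but where the first pair disagrees only on variables appearing in the first halves of the gadgets while the second pair disagrees only on the second halves. Any clause-local rule for $y_C$ must respond asymmetrically to these two disagreement patterns, producing two distinct 3-SAT overlaps for the same 4-SAT overlap and defeating any candidate monotone function $h$; the $\Theta(n)$ gadget count prevents the $o_n(1)$ slack from absorbing the gap. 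An analogous construction handles covers: a cover of $\Phi$ leaving both halves of some clause $C$ ``live'' admits at least three legitimate cover extensions in $\Phi'$ (with $y_C \in \{0,1,*\}$), forcing $\overline{g}$ to collapse distinct covers and breaking the compatibility condition. The main obstacle is keeping the canonical-choice argument genuinely reduction-agnostic rather than aimed at one specific rule; the disjoint-block structure is exactly what turns the combinatorial $2^{k(A)}$ expansion factor into an obstruction that survives across all locally computed choices.
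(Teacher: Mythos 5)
Your proposal is correct and follows essentially the same route as the paper: both arguments hinge on the auxiliary variable $y_C$ (the paper's $\alpha$) being unconstrained whenever both halves of a 4-clause are satisfied, which inflates the witness set, rules out any bijective $g$, and hence defeats both notions at once since each presupposes witness isomorphism. The paper additionally builds a small gadget enforcing $x\OR y$ and $z\OR t$ so as to exhibit a concrete new nontrivial cover ($x=1$, $z=1$, $\alpha=*$, all else $*$) in the image formula, which is the same phenomenon your three-way ($y_C\in\{0,1,*\}$) cover-extension observation identifies.
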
 
\begin{proof} Remember, the idea of simulating a 4-clause $x\OR y\OR z\OR t$ by a 3-CNF formula is to add an extra variable $\alpha$ and simulate the clause above by the formula 
$x\OR y\OR \alpha, \overline{\alpha}\OR z \OR t$. 

To show that the reduction is not cover preserving it is enough to create a formula such that by the reduction we create more covers. 

We accomplish this as follows: we add to the formula clause $C_1=x\OR y \OR z \OR t$. Then, informally, we enforce constraints $x\OR y$ and $z\OR t$ using 4-clauses. Let us accomplish this for the first one (the second case is entirely analogous) as follows: we add clauses $x\OR y \OR x_1\OR x_2, x\OR y \OR \overline{x_1}\OR x_2, x\OR y \OR x_1\OR \overline{x_2}, x\OR y \OR \overline{x_1}\OR \overline{x_2}$. 
Let $\Phi_1$ be the resulting formula and $\Phi_2$ be its 3-CNF reduction. 

The set of solutions of $\Phi_1$ is basically $\{(x,y)\in \{(0,1),(1,0),(1,1)\}\} \times \{(z,t)\in \{(0,1),(1,0),(1,1)\}\}$. None of the variables $x,y,z,t$ can be given a 0/1 value in a cover since they are not supported in clause $C_1$ (as there are at least two true literals in that clause). So the set of covers of $\Phi_1$ coincides with the set of satisfying assignments of $\Phi$, together with the trivial cover $***\ldots**$. 

On the other hand, in $\Phi_2$ auxiliary variable $\alpha$ can be given value $*$. In particular $x=1,z=1,\alpha=*$ and all other variables given value $*$ is a nontrivial cover. 
As all values for $\alpha$ are good, the reduction is also not witness isomorphic, hence not overlap preserving. 
\end{proof} 

\section{Geometry-preserving reductions: an Algebraic Perspective}

In this section we advocate for the development of an algebraic approach to reductions, perhaps  based on \textit{ category theory} \cite{categories-working-math}. The motivations for advocating such an approach are simple: 
\begin{itemize} 
\item[-] One can view each constraint satisfaction problem $CSP(S)$ 
as a category: its objects are $S$-formulas. Given 
two objects (formulas) $\Phi_{1}$ and $\Phi_{2}$, a morphism is a mapping between variables of $\Phi_{1}$ and 
variables of $\Phi_{2}$ that maps clauses of $\Phi_{1}$ to clauses of $\Phi_{2}$. 
\item[-] CSP's, together with polynomial time reductions form a category $\mathcal{CSP}_{\leq_{m}^{P}}$. 
\item[-] We can view our constructions in Section 3, however simple, in a categorical fashion. The proof of Theorem~\ref{red1} 
shows that the reduction ``localizes'': we can interpret the usual gadget-based reduction (and in particular our cover reduction) as 
``gluing'' local morphisms. Then it is possible to construct cover-preserving reductions by: 
\begin{enumerate} 
 \item Adequately constructing them ``locally''. 
\item ``Gluing'' them in a way that preserves covers. 
\end{enumerate}

\end{itemize}

The following 
result shows that cover-preserving reductions also can act as morphisms: 
\begin{theorem} The following (simple) properties are true: 
\begin{enumerate} 
 \item For any CSP $A$ the identity map is a cover-preserving reduction of $A$ to itself. 
\item Cover-preserving reductions compose: if $A,B,C$ are CSP and $\leq_{A,B}$, $\leq_{B,C}$ are cover-preserving reductions of $A$ to 
$B$ and $B$ to $C$, respectively, then $\leq_{A,C}:= \leq_{B,C}\circ \leq_{A,C}$ is a cover-preserving reduction of $A$ to $C$. 
\end{enumerate}

Consequently, constraint satisfaction problems with cover-preserving reductions as morphisms form a category $\mathcal{CSP}_{cover}$, 
indeed a subcategory of $\mathcal{CSP}_{\leq_{m}^{P}}$.

Similar results hold for overlap-preserving reductions. 
\end{theorem}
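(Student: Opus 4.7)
The plan is to verify the two bullet points (identity and composition) for cover-preserving reductions, note that the categorical conclusion is then formal, and finally indicate the (minor) additional care needed to push the argument through for overlap-preserving reductions.

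For the identity, I would take $f = \mathrm{id}$, $g(\langle x, y \rangle) = \langle x, y\rangle$, and $\overline{g}(\langle x, z\rangle) = \langle x, z\rangle$. Every clause of Definition 5 (witness-isomorphism) reduces to a tautology, and the three clauses of Definition~\ref{cover-reduction} are immediate: $\overline{g}$ trivially extends $g$; $(f, \overline{g})$ is witness-isomorphic on covers since each cover is mapped bijectively to itself; and compatibility is preserved because $\sigma_1 \subseteq \sigma_2$ is mapped to the same inclusion.

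For composition, suppose $(f_1, g_1, \overline{g}_1): A \to B$ and $(f_2, g_2, \overline{g}_2): B \to C$ are cover-preserving. Define $f := f_2 \circ f_1$, $g(\langle x, y\rangle) := g_2(g_1(\langle x, y\rangle))$ and $\overline{g}(\langle x, z\rangle) := \overline{g}_2(\overline{g}_1(\langle x, z\rangle))$ (where the outer application makes sense because, by the first axiom of Definition~\ref{cover-reduction}, the first coordinate of $\overline{g}_1(\langle x, z\rangle)$ is $f_1(x)$, exactly an $A$-to-$B$ image). Polynomial computability/invertibility of $f$ and $g$ follows by routine composition of polynomial-time maps. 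That $(f, g)$ is witness-isomorphic in the sense of Definition 5 is a direct chase through the four clauses, composing the corresponding guarantees for $(f_i, g_i)$. The extension property for $\overline{g}$ follows from the corresponding property of each $\overline{g}_i$: on a true witness $y$, $\overline{g}_1(\langle x, y\rangle) = g_1(\langle x, y\rangle)$, and the outer $\overline{g}_2$ agrees with $g_2$ there. The cover-level witness-isomorphism is again an axiom-by-axiom check, and compatibility lifts because $\sigma_1 \subseteq \sigma_2 \Rightarrow \overline{g}_1(\langle x, \sigma_1\rangle) \subseteq \overline{g}_1(\langle x, \sigma_2\rangle)$ (by axiom~3 applied at level~1), to which axiom~3 at level~2 is then applied.

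Associativity of this composition and the unit axioms for the identity follow from the corresponding identities for composition of functions, so constraint satisfaction problems with cover-preserving reductions form a category; it is a subcategory of $\mathcal{CSP}_{\leq_{m}^{P}}$ because the underlying $(f, g)$ of a cover-preserving reduction is in particular a polynomial-time many-one reduction.

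For overlap-preserving reductions the identity works with $h(x) = x$. For composition one sets $h := h_2 \circ h_1$, which is again continuous, monotonically increasing, and satisfies $h(0)=0$, $h(1)=1$. The only subtle point — and the one I expect to be the main (still mild) obstacle — is controlling the $o_n(1)$ error under composition: from stage~1 we get an overlap of the form $h_1(\alpha) + o_n(1)$ which is then fed into stage~2, yielding $h_2(h_1(\alpha) + o_n(1)) + o_n(1)$. Uniform continuity of $h_2$ on the compact interval $[0,1]$ lets us absorb the inner perturbation into an $o_n(1)$ term, giving $h_2(h_1(\alpha)) + o_n(1) = h(\alpha) + o_n(1)$, as required by Definition~\ref{op-reduction}.
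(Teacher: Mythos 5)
Your proposal is correct and follows essentially the same route as the paper: compose the maps $f$, $g$, $\overline{g}$ (resp.\ take identities) and check the axioms clause by clause, with $h:=h_2\circ h_1$ for the overlap case. Your explicit handling of the $o_n(1)$ error under composition via uniform continuity of $h_2$ on $[0,1]$ is a detail the paper's proof elides but is a genuine (and correctly resolved) point.
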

\begin{proof} 
 \begin{enumerate} 
  \item This is trivial. 
\item Let $x$ be an instance of $A$, $y$ be the instance of $B$ that corresponds to $x$ via $\leq_{A,B}$ and $z$ be 
the instance of $C$ that corresponds to $y$ via $\leq_{B,C}$. Further, let $a,a_{1}$ be strings. $a$ is a cover of a witness $a_{1}$ 
for $x$, if and only if $b:=\leq_{A,B}(a)$ is a cover of witness $b_{1}=\leq_{A,B}(a_{1})$ of instance $y=\leq_{A,B}(x)$. This last 
statement happens if and only if $c:= \leq_{B,C}(b)$ is a cover of witness $c_{1}:= \leq_{B,C}(b_{1})$ of instance $z= \leq_{B,C}(y)$. 

But $z=\leq_{B,C}\circ \leq_{A,B}(x)$, $c_{1}=\leq_{B,C}\circ \leq_{A,B}(a_{1})$ and $c=\leq_{B,C}\circ \leq_{A,B}(a)$, so the 
composition of reductions is a reduction and cover preserving. It is clearly specified by a polynomial computable function.  
 \end{enumerate}
 
 Similar arguments function for overlap-preserving reductions. Here we use the fact that the composition of monotonically increasing, continuous functions from [0,1] to [0,1] (such that h(0)=0, h(1)=1) has the same properties. 
\end{proof}

Note that there have recently been several attempts to consider ordinary reductions in computational complexity theory from a structural perspective  \cite{mazzafunctorial,adam2022sheaf,mazza2022categorical}. What can be done for ordinary reductions might be interesting, we believe, for (variants of) our reductions as well.

\section{Conclusions}

Our research clearly has a preliminary character: we see our contribution as a conceptual one, pointing in the right direction rather than having provided the definitive definitions for structure-preserving reductions. Specifically, we believe that one needs to be serious about connecting phase transitions in Combinatorial Optimization to Computational Complexity Theory, however weak the resulting notions may turn out to be. 
(\cite{durgin2020csp} seems to be an intriguing approach in this direction). 

It may be that the results in this paper can be redone using a more restrictive notion of reductions that turns out to be better suited. Whether different variations on the concepts are more appropriate is left for future research. Also left for further research is the issue of classifying the natural encodings of constraint satisfaction problems (particularly for NP-complete problems) under these definitions. This might be difficult, though: our result only shows, for instance, that the natural reduction from 4-SAT to 3-SAT does not have the desired properties, \textit{not that there is no such reduction !}. Also, one cannot solve this problem, in principle without understanding the nature of the phase transition in 3-SAT: the 1-RSB approach predicts the phase transition of $k$-SAT only for large enough values of $k$, and it is not clear that it does for $k=3$. 

In any case we expect that NP-complete problems split in multiple degrees under (variants of) our reductions. Reasons for this belief are twofold: 
\begin{itemize} 
\item[-] First, it is actually known that not all NP-complete problems are witness isomorphic. So we cannot expect to have a single degree under cover-preserving reductions. 
\item[-] Second of all, we expect that 1-in-$k$ SAT and $k$-SAT are \textbf{not} equivalent under overlap preserving reductions, simply because $k$-SAT displays clustering below the phase transition, whereas we expect 1-in-$k$ SAT, $k\geq 3$, not to. Whether our notions of reductions are appropriate enough to capture this difference is an open problem. 
\end{itemize} 

The main weakness of any attempt of connecting the geometry of instances of two problems by reductions is the asymptotic nature of the predictions provided by methods from Statistical Physics. Specifically, such predictions (the existence of a sharp SAT/UNSAT threshold, the existence of one/multiple clusters of solutions, etc.) refer to properties that are true for \textbf{almost all} (rather than all) instances at a given density. In contrast classical reductions (as well as the ones we give) provide guarantees for \textbf{all} instances. 

Secondly, our definitions have not attempted to map (via the reduction) the precise location of the two phase transitions, and there is no reasonable expectation that ordinary reductions preserve this location. A reason is that, for instance, 
$(2+p)$-SAT \cite{2+p:nature} is $NP$-complete for $p>0$, but the location of the phase transition in $(2+p)$-SAT is determined \cite{2+p:nature,2+p-SAT} by the "easy part", at least for $p<0.4$. So, unless the reduction classifies such problems as "easy" ("reducing" an instance $\Phi$ of $(2+p)$-SAT to its 2-SAT component via a notion of reduction that may sometimes err, but is otherwise rather trivial) one cannot simply use ordinary notions of reductions, even restricted as the ones put forward in this paper\footnote{In the journal version of the paper, to be posted on arxiv in the near future, we plan, in fact, to develop such "reductions"}. 

To conclude: to get the notion of reduction appropriate for dealing with phase transitions we might need a notion of reduction that is only correct on one class of instances (and errs on a negligible set of instances at each density). We don't claim that our reductions accomplish this, and only hope that they can function as useful guidelines towards the development of this "correct" notions of reductions. How to do this (and whether it can be done at all) is intriguing. And open. 

\bibliographystyle{eptcs}
\bibliography{/Users/gistrate/Dropbox/texmf/bibtex/bib/bibtheory}

\begin{thebibliography}{10}
\providecommand{\bibitemdeclare}[2]{}
\providecommand{\surnamestart}{}
\providecommand{\surnameend}{}
\providecommand{\urlprefix}{Available at }
\providecommand{\url}[1]{\texttt{#1}}
\providecommand{\href}[2]{\texttt{#2}}
\providecommand{\urlalt}[2]{\href{#1}{#2}}
\providecommand{\doi}[1]{doi:\urlalt{https://doi.org/#1}{#1}}
\providecommand{\eprint}[1]{arXiv:\urlalt{https://arxiv.org/abs/#1}{#1}}
\providecommand{\bibinfo}[2]{#2}

\bibitemdeclare{inproceedings}{1-in-k}
\bibitem{1-in-k}
\bibinfo{author}{D.~\surnamestart Achlioptas\surnameend},
  \bibinfo{author}{A.~\surnamestart Chtcherba\surnameend},
  \bibinfo{author}{G.~\surnamestart Istrate\surnameend} \&
  \bibinfo{author}{C.~\surnamestart Moore\surnameend} (\bibinfo{year}{2001}):
  \emph{\bibinfo{title}{The phase transition in 1-in-$K$ {SAT} and {NAE3SAT}}}.
\newblock In: {\slshape \bibinfo{booktitle}{Proceedings of the 12th ACM-SIAM
  Symposium on Discrete Algorithms (SODA'01)}}, pp. \bibinfo{pages}{719--722},
  \doi{10.5555/365411.365760}.

\bibitemdeclare{article}{2+p-SAT}
\bibitem{2+p-SAT}
\bibinfo{author}{D.~\surnamestart Achlioptas\surnameend}, \bibinfo{author}{L.M.
  \surnamestart Kirousis\surnameend}, \bibinfo{author}{E.~\surnamestart
  Kranakis\surnameend} \& \bibinfo{author}{D.~\surnamestart Krizanc\surnameend}
  (\bibinfo{year}{2001}): \emph{\bibinfo{title}{Rigorous results for random
  $(2+p)$-{SAT}}}.
\newblock {\slshape \bibinfo{journal}{Theoretical Computer Science}}
  \bibinfo{volume}{265}(\bibinfo{number}{1--2}), pp. \bibinfo{pages}{109--129},
  \doi{10.1016/S0304-3975(01)00154-2}.

\bibitemdeclare{inproceedings}{achlioptas-frozen}
\bibitem{achlioptas-frozen}
\bibinfo{author}{D.~\surnamestart Achlioptas\surnameend} \&
  \bibinfo{author}{F.~\surnamestart Ricci-Tersenghi\surnameend}
  (\bibinfo{year}{2006}): \emph{\bibinfo{title}{On the solution space geometry
  of random constraint satisfaction problems}}.
\newblock In: {\slshape \bibinfo{booktitle}{Proceedings of the 36th ACM
  Symposium on Theory of Computing}}, pp. \bibinfo{pages}{130--139},
  \doi{10.1145/1132516.1132537}.

\bibitemdeclare{article}{achlioptas2015solution}
\bibitem{achlioptas2015solution}
\bibinfo{author}{Dimitris \surnamestart Achlioptas\surnameend} \&
  \bibinfo{author}{Michael \surnamestart Molloy\surnameend}
  (\bibinfo{year}{2015}): \emph{\bibinfo{title}{The solution space geometry of
  random linear equations}}.
\newblock {\slshape \bibinfo{journal}{Random Structures \& Algorithms}}
  \bibinfo{volume}{46}(\bibinfo{number}{2}), pp. \bibinfo{pages}{197--231},
  \doi{10.1002/rsa.20494}.

\bibitemdeclare{article}{agrawal2001reducing}
\bibitem{agrawal2001reducing}
\bibinfo{author}{Manindra \surnamestart Agrawal\surnameend},
  \bibinfo{author}{Eric \surnamestart Allender\surnameend},
  \bibinfo{author}{Russell \surnamestart Impagliazzo\surnameend},
  \bibinfo{author}{Toniann \surnamestart Pitassi\surnameend} \&
  \bibinfo{author}{Steven \surnamestart Rudich\surnameend}
  (\bibinfo{year}{2001}): \emph{\bibinfo{title}{Reducing the complexity of
  reductions}}.
\newblock {\slshape \bibinfo{journal}{Computational Complexity}}
  \bibinfo{volume}{10}(\bibinfo{number}{2}), pp. \bibinfo{pages}{117--138},
  \doi{10.1007/s00037-001-8191-1}.

\bibitemdeclare{article}{ber-har:j:iso}
\bibitem{ber-har:j:iso}
\bibinfo{author}{L.~\surnamestart Berman\surnameend} \&
  \bibinfo{author}{J.~\surnamestart Hartmanis\surnameend}
  (\bibinfo{year}{1977}): \emph{\bibinfo{title}{On Isomorphisms and Density of
  {{N}{P}} and Other Complete Sets}}.
\newblock {\slshape \bibinfo{journal}{SIAM Journal on Computing}}
  \bibinfo{volume}{6}(\bibinfo{number}{2}), pp. \bibinfo{pages}{305--322},
  \doi{10.1137/0206023}.

\bibitemdeclare{article}{surveyprop-rsa2005}
\bibitem{surveyprop-rsa2005}
\bibinfo{author}{A.~\surnamestart Braunstein\surnameend},
  \bibinfo{author}{M.~\surnamestart M\'{e}zard\surnameend} \&
  \bibinfo{author}{R.~\surnamestart Zecchina\surnameend}
  (\bibinfo{year}{2005}): \emph{\bibinfo{title}{Survey propagation: an
  algorithm for satisfiability}}.
\newblock {\slshape \bibinfo{journal}{Random Structures and Algorithms}}
  \bibinfo{volume}{27}(\bibinfo{number}{2}), pp. \bibinfo{pages}{201--226},
  \doi{10.1002/rsa.20057}.

\bibitemdeclare{inproceedings}{bresler2022algorithmic}
\bibitem{bresler2022algorithmic}
\bibinfo{author}{Guy \surnamestart Bresler\surnameend} \&
  \bibinfo{author}{Brice \surnamestart Huang\surnameend}
  (\bibinfo{year}{2021}): \emph{\bibinfo{title}{The algorithmic phase
  transition of random k-{SAT} for low degree polynomials}}.
\newblock In: {\slshape \bibinfo{booktitle}{62nd Annual IEEE Symposium on
  Foundations of Computer Science (FOCS'21)}}, 
  pp. \bibinfo{pages}{298--309}, \doi{10.1109/FOCS52979.2021.00038}.

\bibitemdeclare{inproceedings}{bulatov2017dichotomy}
\bibitem{bulatov2017dichotomy}
\bibinfo{author}{Andrei~A \surnamestart Bulatov\surnameend}
  (\bibinfo{year}{2017}): \emph{\bibinfo{title}{A dichotomy theorem for
  nonuniform CSPs}}.
\newblock In: {\slshape \bibinfo{booktitle}{58th Annual IEEE Symposium on
  Foundations of Computer Science (FOCS'17)}}, 
  pp. \bibinfo{pages}{319--330}, \doi{10.1109/FOCS.2017.37}.

\bibitemdeclare{article}{coja2020replica}
\bibitem{coja2020replica}
\bibinfo{author}{Amin \surnamestart Coja-Oghlan\surnameend},
  \bibinfo{author}{Tobias \surnamestart Kapetanopoulos\surnameend} \&
  \bibinfo{author}{Noela \surnamestart M{\"u}ller\surnameend}
  (\bibinfo{year}{2020}): \emph{\bibinfo{title}{The replica symmetric phase of
  random constraint satisfaction problems}}.
\newblock {\slshape \bibinfo{journal}{Combinatorics, Probability and
  Computing}} \bibinfo{volume}{29}(\bibinfo{number}{3}), pp.
  \bibinfo{pages}{346--422}, \doi{10.1017/S0963548319000440}.

\bibitemdeclare{inproceedings}{creignou-daude-threshold}
\bibitem{creignou-daude-threshold}
\bibinfo{author}{N.~\surnamestart Creignou\surnameend} \&
  \bibinfo{author}{H.~\surnamestart Daud\'{e}\surnameend}
  (\bibinfo{year}{2004}): \emph{\bibinfo{title}{Coarse and Sharp Thresholds for
  Random Generalized Satisfiability Problems}}.
\newblock In \bibinfo{editor}{M.~Drmota \surnamestart et~al.\surnameend},
  editor: {\slshape \bibinfo{booktitle}{Mathematics and Computer Science III:
  Algorithms, Trees, Combinatorics and Probabilities}},
  \bibinfo{publisher}{Birkhauser}, pp. \bibinfo{pages}{507--517},
  \doi{10.1007/978-3-0348-7915-6}.

\bibitemdeclare{article}{creignou1995class}
\bibitem{creignou1995class}
\bibinfo{author}{Nadia \surnamestart Creignou\surnameend}
  (\bibinfo{year}{1995}): \emph{\bibinfo{title}{The class of problems that are
  linearly equivalent to satisfiability or a uniform method for proving
  NP-completeness}}.
\newblock {\slshape \bibinfo{journal}{Theoretical Computer Science}}
  \bibinfo{volume}{145}(\bibinfo{number}{1-2}), pp. \bibinfo{pages}{111--145},
  \doi{10.1016/0304-3975(94)0018}.

\bibitemdeclare{inproceedings}{dalmau2023local}
\bibitem{dalmau2023local}
\bibinfo{author}{V\'{\i}ctor \surnamestart Dalmau\surnameend} \&
  \bibinfo{author}{Jakub \surnamestart Opr\v{s}al\surnameend}
  (\bibinfo{year}{2024}): \emph{\bibinfo{title}{Local consistency as a
  reduction between constraint satisfaction problems}}.
\newblock In: {\slshape \bibinfo{booktitle}{Proceedings of the 39th Annual
  ACM/IEEE Symposium on Logic in Computer Science}}, \bibinfo{series}{LICS
  '24}, \doi{10.1145/3661814.3662068}.

\bibitemdeclare{article}{mezard-mora}
\bibitem{mezard-mora}
\bibinfo{author}{H.~\surnamestart Daud\'{e}\surnameend},
  \bibinfo{author}{M.~\surnamestart M\'{e}zard\surnameend},
  \bibinfo{author}{T.~\surnamestart Mora\surnameend} \&
  \bibinfo{author}{R.~\surnamestart Zecchina\surnameend}
  (\bibinfo{year}{2008}): \emph{\bibinfo{title}{Pairs of {SAT} assignments in
  random boolean formulae}}.
\newblock {\slshape \bibinfo{journal}{Theoretical Computer Sci.}}
  \bibinfo{volume}{393}(\bibinfo{number}{1-3}), pp. \bibinfo{pages}{260--279},
  \doi{10.1016/j.tcs.2008.01.005}.

\bibitemdeclare{article}{ding2022proof}
\bibitem{ding2022proof}
\bibinfo{author}{Jian \surnamestart Ding\surnameend}, \bibinfo{author}{Allan
  \surnamestart Sly\surnameend} \& \bibinfo{author}{Nike \surnamestart
  Sun\surnameend} (\bibinfo{year}{2022}): \emph{\bibinfo{title}{Proof of the
  satisfiability conjecture for large k}}.
\newblock {\slshape \bibinfo{journal}{Annals of Mathematics}}
  \bibinfo{volume}{196}(\bibinfo{number}{1}), pp. \bibinfo{pages}{1--388},
  \doi{10.4007/annals.2022.196.1.1}.

\bibitemdeclare{article}{durgin2020csp}
\bibitem{durgin2020csp}
\bibinfo{author}{Alexander \surnamestart Durgin\surnameend}
  (\bibinfo{year}{2020}): \emph{\bibinfo{title}{{CSP}-Completeness And Its
  Applications}}.
\newblock {\slshape \bibinfo{journal}{M.Sc. Thesis, Washington University in
  Saint Louis}}, \doi{10.7936/2sbr-dv81}.

\bibitemdeclare{article}{fischer-ipl}
\bibitem{fischer-ipl}
\bibinfo{author}{S.~\surnamestart Fischer\surnameend} (\bibinfo{year}{1995}):
  \emph{\bibinfo{title}{A note on the Complexity of Local Search Problems}}.
\newblock {\slshape \bibinfo{journal}{Information Processing Letters}}
  \bibinfo{volume}{53}(\bibinfo{number}{1}), pp. \bibinfo{pages}{69--75},
  \doi{10.1016/0020-0190(94)00184-Z}.

\bibitemdeclare{inproceedings}{wi-local-search}
\bibitem{wi-local-search}
\bibinfo{author}{S.~\surnamestart Fischer\surnameend},
  \bibinfo{author}{L.~\surnamestart Hemaspaandra\surnameend} \&
  \bibinfo{author}{L.~\surnamestart Torenvliet\surnameend}
  (\bibinfo{year}{1997}): \emph{\bibinfo{title}{Witness-Isomorphic Reductions
  and Local Search}}.
\newblock In \bibinfo{editor}{M.~\surnamestart Sorbi\surnameend}, editor:
  {\slshape \bibinfo{booktitle}{Complexity, Logic and Recursion Theory}},
  \bibinfo{series}{Lecture Notes in Pure and Applied Mathematics},
  \bibinfo{publisher}{Marcel Dekker}, pp. \bibinfo{pages}{207--223},
  \doi{10.1201/9780429187490}.

\bibitemdeclare{article}{gamarnik2021overlap}
\bibitem{gamarnik2021overlap}
\bibinfo{author}{David \surnamestart Gamarnik\surnameend}
  (\bibinfo{year}{2021}): \emph{\bibinfo{title}{The overlap gap property: A
  topological barrier to optimizing over random structures}}.
\newblock {\slshape \bibinfo{journal}{Proceedings of the National Academy of
  Sciences}} \bibinfo{volume}{118}(\bibinfo{number}{41}), p.
  \bibinfo{pages}{e2108492118}, \doi{10.1073/pnas.2108492118}.

\bibitemdeclare{article}{gamarnik2023algorithmic}
\bibitem{gamarnik2023algorithmic}
\bibinfo{author}{David \surnamestart Gamarnik\surnameend} \&
  \bibinfo{author}{Eren~C \surnamestart K{\i}z{\i}lda{\u{g}}\surnameend}
  (\bibinfo{year}{2023}): \emph{\bibinfo{title}{Algorithmic obstructions in the
  random number partitioning problem}}.
\newblock {\slshape \bibinfo{journal}{The Annals of Applied Probability}}
  \bibinfo{volume}{33}(\bibinfo{number}{6B}), pp. \bibinfo{pages}{5497--5563},
  \doi{10.1214/23-AAP1953}.

\bibitemdeclare{article}{hatami-molloy}
\bibitem{hatami-molloy}
\bibinfo{author}{Hamed \surnamestart Hatami\surnameend} \&
  \bibinfo{author}{Michael \surnamestart Molloy\surnameend}
  (\bibinfo{year}{2008}): \emph{\bibinfo{title}{Sharp thresholds for constraint
  satisfaction problems and homomorphisms}}.
\newblock {\slshape \bibinfo{journal}{Random Structures \& Algorithms}}
  \bibinfo{volume}{33}(\bibinfo{number}{3}), pp. \bibinfo{pages}{310--332},
  \doi{10.1002/rsa.20225}.

\bibitemdeclare{inproceedings}{hunt2001strongly}
\bibitem{hunt2001strongly}
\bibinfo{author}{Harry~B \surnamestart Hunt~III\surnameend},
  \bibinfo{author}{Madhav~V \surnamestart Marathe\surnameend} \&
  \bibinfo{author}{Richard~E \surnamestart Stearns\surnameend}
  (\bibinfo{year}{2001}): \emph{\bibinfo{title}{Strongly-local reductions and
  the complexity/efficient approximability of algebra and optimization on
  abstract algebraic structures}}.
\newblock In: {\slshape \bibinfo{booktitle}{Proceedings of the 46th
  International Symposium on Symbolic and Algebraic Computation (ISSAC'21)}},
  pp. \bibinfo{pages}{183--191}, \doi{10.1145/384101.384126}.

\bibitemdeclare{techreport}{local-reductions-tr}
\bibitem{local-reductions-tr}
\bibinfo{author}{H.{B}. \surnamestart {H}unt {III}\surnameend},
  \bibinfo{author}{R.~\surnamestart Stearns\surnameend} \&
  \bibinfo{author}{M.V. \surnamestart Marathe\surnameend}
  (\bibinfo{year}{2000}): \emph{\bibinfo{title}{Relational Representability,
  Local Reductions and the Complexity of Generalized Satisfiability Problems}}.
\newblock \bibinfo{type}{Technical Report} \bibinfo{number}{LA-UR-006108},
  \bibinfo{institution}{Los Alamos National Laboratory}.

\bibitemdeclare{inproceedings}{istrate:ccc00}
\bibitem{istrate:ccc00}
\bibinfo{author}{G.~\surnamestart Istrate\surnameend} (\bibinfo{year}{2000}):
  \emph{\bibinfo{title}{Computational Complexity and Phase Transitions}}.
\newblock In: {\slshape \bibinfo{booktitle}{Proceedings of the 15th I.E.E.E.
  Annual Conference on Computational Complexity (CCC'00)}}, pp.
  \bibinfo{pages}{104--115}, \doi{10.1109/CCC.2000.856740}.

\bibitemdeclare{article}{istrate:horn}
\bibitem{istrate:horn}
\bibinfo{author}{G.~\surnamestart Istrate\surnameend} (\bibinfo{year}{2002}):
  \emph{\bibinfo{title}{The phase transition in random {H}orn satisfiability
  and its algorithmic implications}}.
\newblock {\slshape \bibinfo{journal}{Random Structures and Algorithms}}
  \bibinfo{volume}{4}, pp. \bibinfo{pages}{483--506}, \doi{10.1002/rsa.10028}.

\bibitemdeclare{inproceedings}{istrate-khorn}
\bibitem{istrate-khorn}
\bibinfo{author}{G.~\surnamestart Istrate\surnameend} (\bibinfo{year}{2004}):
  \emph{\bibinfo{title}{On the Satisfiability of random $k$-Horn formulae}}.
\newblock In \bibinfo{editor}{P.~\surnamestart Winkler\surnameend} \&
  \bibinfo{editor}{J.~\surnamestart Nesetril\surnameend}, editors: {\slshape
  \bibinfo{booktitle}{Graphs, Morphisms and Statistical Physics}}, pp.
  \bibinfo{pages}{113--136}, \doi{10.1090/dimacs/063}.

\bibitemdeclare{article}{istrate-sharp}
\bibitem{istrate-sharp}
\bibinfo{author}{G.~\surnamestart Istrate\surnameend} (\bibinfo{year}{2005}):
  \emph{\bibinfo{title}{Threshold properties of random boolean constraint
  satisfaction problems}}.
\newblock {\slshape \bibinfo{journal}{Discrete Applied Mathematics}}
  \bibinfo{volume}{153}, pp. \bibinfo{pages}{141--152},
  \doi{10.1016/j.dam.2005.05.010}.

\bibitemdeclare{article}{clusters-overlaps}
\bibitem{clusters-overlaps}
\bibinfo{author}{G.~\surnamestart Istrate\surnameend} (\bibinfo{year}{2007}):
  \emph{\bibinfo{title}{Satisfiability of random {B}oolean {CSP}: Clusters and
  Overlaps}}.
\newblock {\slshape \bibinfo{journal}{Journal of Universal Computer Science}}
  \bibinfo{volume}{13}(\bibinfo{number}{11}), pp. \bibinfo{pages}{1655--1670},
  \doi{10.3217/jucs-013-11-1655}.

\bibitemdeclare{article}{aimath04}
\bibitem{aimath04}
\bibinfo{author}{G.~\surnamestart Istrate\surnameend},
  \bibinfo{author}{A.~\surnamestart Percus\surnameend} \&
  \bibinfo{author}{S.~\surnamestart Boettcher\surnameend}
  (\bibinfo{year}{2005}): \emph{\bibinfo{title}{Spines of Random Constraint
  Satisfaction Problems: Definition and Connection with Computational
  Complexity}}.
\newblock {\slshape \bibinfo{journal}{Annals of Mathematics and Artificial
  Intelligence}} \bibinfo{volume}{44}(\bibinfo{number}{4}), pp.
  \bibinfo{pages}{353--372}, \doi{10.1007/s10472-005-7033-2}.

\bibitemdeclare{article}{holes-1-in-k}
\bibitem{holes-1-in-k}
\bibinfo{author}{Gabriel \surnamestart Istrate\surnameend}
  (\bibinfo{year}{2009}): \emph{\bibinfo{title}{Geometric properties of
  satisfying assignments of random $\varepsilon$-1-in-k SAT}}.
\newblock {\slshape \bibinfo{journal}{International Journal of Computer
  Mathematics}} \bibinfo{volume}{86}(\bibinfo{number}{12}), pp.
  \bibinfo{pages}{2029--2039}, \doi{10.1080/00207160903193970}.

\bibitemdeclare{inproceedings}{kroc-sp-uai07}
\bibitem{kroc-sp-uai07}
\bibinfo{author}{L.~\surnamestart Kroc\surnameend},
  \bibinfo{author}{A.~\surnamestart Sabharwal\surnameend} \&
  \bibinfo{author}{B.~\surnamestart Selman\surnameend} (\bibinfo{year}{2007}):
  \emph{\bibinfo{title}{Survey Propagation Revisited}}.
\newblock In: {\slshape \bibinfo{booktitle}{Proceedings of the 23rd Conf. on
  Uncertainty in Artificial Intelligence ({UAI}'07)}}, pp.
  \bibinfo{pages}{217--226}, \doi{10.5555/3020488.3020515}.

\bibitemdeclare{article}{krzakala2007gibbs}
\bibitem{krzakala2007gibbs}
\bibinfo{author}{Florent \surnamestart Krzaka{\l}a\surnameend},
  \bibinfo{author}{Andrea \surnamestart Montanari\surnameend},
  \bibinfo{author}{Federico \surnamestart Ricci-Tersenghi\surnameend},
  \bibinfo{author}{Guilhem \surnamestart Semerjian\surnameend} \&
  \bibinfo{author}{Lenka \surnamestart Zdeborov{\'a}\surnameend}
  (\bibinfo{year}{2007}): \emph{\bibinfo{title}{Gibbs states and the set of
  solutions of random constraint satisfaction problems}}.
\newblock {\slshape \bibinfo{journal}{Proceedings of the National Academy of
  Sciences}} \bibinfo{volume}{104}(\bibinfo{number}{25}), pp.
  \bibinfo{pages}{10318--10323}, \doi{10.1073/pnas.0703685104}.

\bibitemdeclare{article}{factor}
\bibitem{factor}
\bibinfo{author}{F.R. \surnamestart Kschischang\surnameend},
  \bibinfo{author}{B.J. \surnamestart Frey\surnameend} \&
  \bibinfo{author}{H.-A. \surnamestart Loeliger\surnameend}
  (\bibinfo{year}{2001}): \emph{\bibinfo{title}{Factor graphs and the
  sum-product algorithm}}.
\newblock {\slshape \bibinfo{journal}{IEEE Trans. Infor. Theory}}
  \bibinfo{volume}{47}, pp. \bibinfo{pages}{498--519}, \doi{10.1109/18.910572}.

\bibitemdeclare{book}{categories-working-math}
\bibitem{categories-working-math}
\bibinfo{author}{S.~\surnamestart {M}ac {L}ane\surnameend}
  (\bibinfo{year}{1998}): \emph{\bibinfo{title}{Categories for the Working
  Mathematician (second edition)}}.
\newblock \bibinfo{publisher}{Springer Verlag}, \doi{10.1073/pnas.0703685104}.

\bibitemdeclare{article}{maneva-sp}
\bibitem{maneva-sp}
\bibinfo{author}{Elitza \surnamestart Maneva\surnameend},
  \bibinfo{author}{Elchanan \surnamestart Mossel\surnameend} \&
  \bibinfo{author}{Martin~J. \surnamestart Wainwright\surnameend}
  (\bibinfo{year}{2007}): \emph{\bibinfo{title}{A new look at survey
  propagation and its generalizations}}.
\newblock {\slshape \bibinfo{journal}{Journal of the A.C.M.}}
  \bibinfo{volume}{54}(\bibinfo{number}{4}), p.~\bibinfo{pages}{17},
  \doi{10.1145/1255443.1255445}.

\bibitemdeclare{inproceedings}{mazzafunctorial}
\bibitem{mazzafunctorial}
\bibinfo{author}{Damiano \surnamestart Mazza\surnameend}
  (\bibinfo{year}{2018}): \emph{\bibinfo{title}{A Functorial Approach to
  Reductions among Decision Problems}}.
\newblock  {\slshape \bibinfo{booktitle}{Abstract presented at DICE'2018}}.

\bibitemdeclare{inproceedings}{mazza2022categorical}
\bibitem{mazza2022categorical}
\bibinfo{author}{Damiano \surnamestart Mazza\surnameend}
  (\bibinfo{year}{2022}): \emph{\bibinfo{title}{A Categorical Approach to
  Descriptive Complexity Theory}}.
\newblock {\slshape \bibinfo{booktitle}{Abstract presented at the Structure
  Meets Power Workshop}}, p.~\bibinfo{pages}{9}.

\bibitemdeclare{book}{mezard-montanari}
\bibitem{mezard-montanari}
\bibinfo{author}{M.~\surnamestart M\'{e}zard\surnameend} \&
  \bibinfo{author}{A.~\surnamestart Montanari\surnameend}
  (\bibinfo{year}{2007}): \emph{\bibinfo{title}{Information, Physics and
  Computation}}.
\newblock \bibinfo{publisher}{Oxford University Press},
  \doi{10.1093/acprof:oso/9780198570837.001.0001}.

\bibitemdeclare{article}{cond-mat/0504070/prl}
\bibitem{cond-mat/0504070/prl}
\bibinfo{author}{M.~\surnamestart M\'{e}zard\surnameend},
  \bibinfo{author}{T.~\surnamestart Mora\surnameend} \&
  \bibinfo{author}{R.~\surnamestart Zecchina\surnameend}
  (\bibinfo{year}{2005}): \emph{\bibinfo{title}{Clustering of solutions in the
  random satisfiability problem}}.
\newblock {\slshape \bibinfo{journal}{Physical Review Letters}}
  \bibinfo{volume}{94}(\bibinfo{number}{197205}),
  \doi{10.1103/PhysRevLett.94.197205}.

\bibitemdeclare{inproceedings}{molloy-stoc2002}
\bibitem{molloy-stoc2002}
\bibinfo{author}{M.~\surnamestart Molloy\surnameend} (\bibinfo{year}{2002}):
  \emph{\bibinfo{title}{Models for random constraint satisfaction problems}}.
\newblock In: {\slshape \bibinfo{booktitle}{Proceedings of the 32nd ACM
  Symposium on Theory of Computing}}, \doi{10.1137/S0097539700368667}.

\bibitemdeclare{article}{2+p:nature}
\bibitem{2+p:nature}
\bibinfo{author}{R.~\surnamestart Monasson\surnameend},
  \bibinfo{author}{R.~\surnamestart Zecchina\surnameend},
  \bibinfo{author}{S.~\surnamestart Kirkpatrick\surnameend},
  \bibinfo{author}{B.~\surnamestart Selman\surnameend} \&
  \bibinfo{author}{L.~\surnamestart Troyansky\surnameend}
  (\bibinfo{year}{1999}): \emph{\bibinfo{title}{Determining computational
  complexity from characteristic phase transitions}}.
\newblock {\slshape \bibinfo{journal}{Nature}}
  \bibinfo{volume}{400}(\bibinfo{number}{8}), pp. \bibinfo{pages}{133--137},
  \doi{10.1038/22055}.

\bibitemdeclare{article}{continuous-discontinuous-journal}
\bibitem{continuous-discontinuous-journal}
\bibinfo{author}{C.~\surnamestart Moore\surnameend},
  \bibinfo{author}{G.~\surnamestart Istrate\surnameend},
  \bibinfo{author}{D.~\surnamestart Demopoulos\surnameend} \&
  \bibinfo{author}{M.~\surnamestart Vardi\surnameend} (\bibinfo{year}{2007}):
  \emph{\bibinfo{title}{A continuous-discontinuous second-order transition in
  the satisfiability of a class of {H}orn formulas}}.
\newblock {\slshape \bibinfo{journal}{Random Structures and Algorithms}}
  \bibinfo{volume}{31}(\bibinfo{number}{2}), pp. \bibinfo{pages}{173--185},
  \doi{10.1002/rsa.20176}.

\bibitemdeclare{inproceedings}{adam2022sheaf}
\bibitem{adam2022sheaf}
\bibinfo{author}{Adam \surnamestart O~Conghaile\surnameend},
  \bibinfo{author}{Samson \surnamestart Abramsky\surnameend} \&
  \bibinfo{author}{Anuj \surnamestart Dawar\surnameend} (\bibinfo{year}{2022}):
  \emph{\bibinfo{title}{A sheaf-theoretic approach to (P)CSPs}}.
\newblock {\slshape \bibinfo{booktitle}{Abstract presented at the Structure
  Meets Power Workshop}}, p.~\bibinfo{pages}{3}.

\bibitemdeclare{book}{comp-statphys}
\bibitem{comp-statphys}
\bibinfo{editor}{A.~\surnamestart Percus\surnameend},
  \bibinfo{editor}{G.~\surnamestart Istrate\surnameend} \&
  \bibinfo{editor}{C.~\surnamestart Moore\surnameend}, editors
  (\bibinfo{year}{2006}): \emph{\bibinfo{title}{Computational Complexity and
  Statistical Physics}}.
\newblock \bibinfo{publisher}{Oxford University Press},
  \doi{10.1093/oso/9780195177374.001.0001}.

\bibitemdeclare{unpublished}{bp-bisection}
\bibitem{bp-bisection}
\bibinfo{author}{Allon \surnamestart Percus\surnameend},
  \bibinfo{author}{Gabriel \surnamestart Istrate\surnameend},
  \bibinfo{author}{Shiva~Prasad \surnamestart Kasiviswanathan\surnameend},
  \bibinfo{author}{Stefan \surnamestart Boettcher\surnameend},
  \bibinfo{author}{Nicholas \surnamestart Hengartner\surnameend} \&
  \bibinfo{author}{Bruno \surnamestart Gon\c{c}alves~Tavares\surnameend}
  (\bibinfo{year}{2007}): \emph{\bibinfo{title}{Belief Propagation for Graph
  Bisection}}.
\newblock \bibinfo{note}{(unpublished manuscript)}.

\bibitemdeclare{inproceedings}{schaefer-dich}
\bibitem{schaefer-dich}
\bibinfo{author}{T.~J. \surnamestart Schaefer\surnameend}
  (\bibinfo{year}{1978}): \emph{\bibinfo{title}{The complexity of
  satisfiability problems}}.
\newblock In: {\slshape \bibinfo{booktitle}{Proceedings of the 13th ACM
  Symposium on Theory of Computing (STOC'78)}}, pp.
  \bibinfo{pages}{216--226}, \doi{10.1145/800133.804350}.

\bibitemdeclare{article}{stewart1994completeness}
\bibitem{stewart1994completeness}
\bibinfo{author}{Iain~A \surnamestart Stewart\surnameend}
  (\bibinfo{year}{1994}): \emph{\bibinfo{title}{On completeness for NP via
  projection translations}}.
\newblock {\slshape \bibinfo{journal}{Mathematical Systems Theory}}
  \bibinfo{volume}{27}(\bibinfo{number}{2}), pp. \bibinfo{pages}{125--157},
  \doi{10.1007/BF01195200}.

\bibitemdeclare{article}{vsulc2010belief}
\bibitem{vsulc2010belief}
\bibinfo{author}{Petr \surnamestart {\v{S}}ulc\surnameend} \&
  \bibinfo{author}{Lenka \surnamestart Zdeborov{\'a}\surnameend}
  (\bibinfo{year}{2010}): \emph{\bibinfo{title}{Belief propagation for graph
  partitioning}}.
\newblock {\slshape \bibinfo{journal}{Journal of Physics A: Mathematical and
  Theoretical}} \bibinfo{volume}{43}(\bibinfo{number}{28}), p.
  \bibinfo{pages}{285003}, \doi{10.1088/1751-8113/43/28/285003}.

\bibitemdeclare{article}{tu2008survey}
\bibitem{tu2008survey}
\bibinfo{author}{Ronghui \surnamestart Tu\surnameend}, \bibinfo{author}{Yongyi
  \surnamestart Mao\surnameend} \& \bibinfo{author}{Jiying \surnamestart
  Zhao\surnameend} (\bibinfo{year}{2008}): \emph{\bibinfo{title}{Survey
  Propagation as ''Probabilistic Token Passing"}}.
\newblock {\slshape \bibinfo{journal}{IEICE Transactions on Information and
  Systems}} \bibinfo{volume}{91}(\bibinfo{number}{2}), pp.
  \bibinfo{pages}{231--233}, \doi{10.1093/ietisy/e91-d.2.231}.

\bibitemdeclare{article}{tu2010sp}
\bibitem{tu2010sp}
\bibinfo{author}{Ronghui \surnamestart Tu\surnameend}, \bibinfo{author}{Yongyi
  \surnamestart Mao\surnameend} \& \bibinfo{author}{Jiying \surnamestart
  Zhao\surnameend} (\bibinfo{year}{2010}): \emph{\bibinfo{title}{Is SP BP?}}
\newblock {\slshape \bibinfo{journal}{IEEE Transactions on Information Theory}}
  \bibinfo{volume}{56}(\bibinfo{number}{6}), pp. \bibinfo{pages}{2999--3032},
  \doi{10.1109/ITW.2007.4313080}.

\bibitemdeclare{article}{jsat-support}
\bibitem{jsat-support}
\bibinfo{author}{D.~\surnamestart Vilenchik\surnameend} (\bibinfo{year}{2007}):
  \emph{\bibinfo{title}{It's all about the Support: A new perspective on the
  Satisfiability Problem}}.
\newblock {\slshape \bibinfo{journal}{Journal of Satisfiability, Boolean
  Modeling and Computation}} \bibinfo{volume}{3}, pp.
  \bibinfo{pages}{125--139}, \doi{10.3233/SAT190033}.

\bibitemdeclare{article}{yannakakis1997computational}
\bibitem{yannakakis1997computational}
\bibinfo{author}{Mihalis \surnamestart Yannakakis\surnameend}
  (\bibinfo{year}{1997}): \emph{\bibinfo{title}{Computational complexity}}.
\newblock {\slshape In \bibinfo{journal}{Local search in combinatorial
  optimization}}, pp. \bibinfo{pages}{19--55}, \doi{10.2307/j.ctv346t9c.7}.

\bibitemdeclare{article}{zhuk2020proof}
\bibitem{zhuk2020proof}
\bibinfo{author}{Dmitriy \surnamestart Zhuk\surnameend} (\bibinfo{year}{2020}):
  \emph{\bibinfo{title}{A proof of the CSP dichotomy conjecture}}.
\newblock {\slshape \bibinfo{journal}{Journal of the A.C.M.}}
  \bibinfo{volume}{67}(\bibinfo{number}{5}), pp. \bibinfo{pages}{1--78},
  \doi{10.1145/3402029}.

\end{thebibliography}

\end{document}